\newcommand*{\xhdr}[1]{\vspace{0.5mm} \noindent{{\bf #1.}}}
\title{Can A User Anticipate What Her Followers Want?}
\date{}
\author{Abir De}
\author{Adish Singla}
\author{Utkarsh Upadhyay}
\author{Manuel Gomez-Rodriguez}
\affil{Max Planck Institute for Software Systems, \\ \{ade,adishs,utkarshu,manuelgr\}@mpi-sws.org}
\begin{document}
\maketitle

\begin{abstract}
Whenever a social media user decides to share a story, she is ty\-pi\-cally pleased to receive likes, comments, shares, or,
more generally, \emph{feedback} from her follo\-wers.
%
%
As a result, she may feel compelled to use the feedback she receives to (re-)estimate her follo\-wers'{} pre\-fe\-ren\-ces and 
decide which stories to share next to receive more (positive) feedback.
Under which conditions can she \emph{succeed}?
In this work, we first look into  this problem from a theoretical pers\-pec\-tive and then provide a set of practical algorithms to 
identify and characterize such behavior in social media.
%

More specifically, we address the above problem from the perspective of sequential decision making and utility maximization. 
For a wide variety of utility functions, we first show that, to succeed, a user requires to actively trade off exploitation---sharing stories 
which lead to more (positive) feedback---and exploration---sharing stories to learn about her followers'{} preferences.
%
However, exploration is not necessary if a user utilizes the feedback her followers provide to other users in addition to the feedback
she receives.
Then, we develop a utility estimation framework for observation data, which relies on statistical hypothesis testing to determine whether 
a user utilizes the feedback she receives from each of her followers to decide what to post next.
%
%
Experiments on synthetic data illustrate our theoretical fin\-dings and show that our estimation framework is able to accurately recover 
users'{} underlying utility functions.
%
Experiments on several real datasets gathered from Twitter and Reddit reveal that up to $82$\% ($43$\%) of the Twitter (Reddit) users in our 
datasets do use the feedback they receive to decide what to post next.

\end{abstract}
\maketitle

\vspace{-1mm}
\section{Introduction}
\label{sec:introduction}
%

Political parties, corporations, celebrities as well as ordinary people use social media to build, reach, and share stories with their own audience.
For example, political leaders share details about their activities in hopes of tapping new voters~\cite{dailyiowan}, 
corporations offer insights about their latest products and services with potential customers~\cite{de2012popularity},
celebrities give a glimpse of their lavish lifestyle to strengthen their fan base~\cite{wsj}, 
and
ordinary people share personal stories with their friends~\cite{huffingtonpost}. 
In all these cases, social media users---politicians, corporations, celebrities, or ordinary people---receive feedback from their followers---their voters, customers, fans, or friends---by means of 
likes, comments, or shares.
Moreover, this feedback provides hints about the preferences of their followers: it lets the users know what does or does not work, and it influences what they share next, as shown by an 
increasing number of em\-pi\-ri\-cal studies~\cite{grinberg2016changes,grinberg2017understanding,dholakia2004social,joinson2008looking,cheng2014community,burke2010social,burke2013using,burke2014growing,ellison2014cultivating,gray2013wants}.
In this context, it is perhaps surprising that feedback models of posting behavior are largely nonexistent to date. However, such models are of outstanding interest since they 
would allow us to answer two fundamental questions:
\vspace*{-1mm}
\begin{itemize}[leftmargin=0.7cm]
%
\item[(i)] Can a user succeed at maximizing the (positive) feedback she receives if, \emph{a priori}, does not know her 
followers'{} preferences?

\item[(ii)] Can we determine whether a user utilizes the feedback she receives from each of her followers to decide what
to post next using observational data? 
\end{itemize}
%
%
%
%
%
%
%

%
%
%
By answering the above questions, we will not only advance our understanding of how feedback may influence a user'{}s posting 
behavior but will also facilitate the design of more effective algorithms for viral marketing and user personalization.

\vspace{-2mm}
\subsection{Overview of our Approach}
\vspace{-1mm}
In this paper, we introduce a utility maximization feedback model of posting behavior, which is specially well-fitted to investigate the above
questions. 
%
%
More specifically,
we assume that each user has an underlying (linear) utility function, which assigns different weights to the feedback the user receives from each of her followers.
%
Moreover, every time the user shares a story with her followers, they provide their feedback according to a set of preferences.
%
%
If the user knows perfectly her followers'{} preferences, we can then show that the optimal posting strategy, which maximizes the 
user'{}s utility function, is deterministic.
If the user does not know their preferences, we have the following theoretical results:
\begin{itemize}[leftmargin=0.7cm]
\item[I.] If the user estimates her followers'{} preferences from the feedback she receives over time,
%
%
she needs to resort to posting strategies that effectively trade-off exploitation---sharing stories to maximize her utility---and exploration---sharing
stories to learn about her followers'{} preferences.
%
More formally, we can show that \emph{posterior sampling} based posting strategies achieve logarithmic regret (\ie, $O(\log T)$) while 
strategies based on \emph{point estimates} suffer from linear regret (\ie, $\Theta(T)$) where $T$ denotes the total time steps.

\item[II.] If the user can, in addition to the feedback she receives, also use the feedback her followers give to other users to estimate
her followers'{} preferences, she is better off using it.
%
More specifically, we can show that posterior sampling based posting strategies achieve constant regret (\ie, $O(1)$) and, perhaps surprisingly, 
strategies based on point estimates achieve sublinear regret (\ie, $o(T)$).
\end{itemize}
%
%
%
In addition to the above theoretical analysis, we also develop a utility estimation framework, which relies on statistical hypothesis testing to determine whether 
a user utilizes the feedback she receives from each of her followers to decide what to post next. 
%
%
%
Finally, we perform a variety of experiments using both synthetic and real Twitter and Reddit data.
Experiments on synthetic data illustrate our theoretical findings and show that our utility estimation framework is able to accurately recover the users'{} underlying 
utility functions.
%
%
%
Experiments on several real datasets gathered from Twitter and Reddit reveal that up to $82$\% ($43$\%) of the Twitter (Reddit) users in our 
datasets do use the feedback they receive to decide what to post next.
\vspace{-3mm}
\subsection{Related Work}
\vspace{-1mm}
In addition to the empirical studies on how feedback influences user behavior, discussed previously, our work also relates to revealed preference
theory, smart broadcasting, and multi-armed bandits.
%

\xhdr{Revealed preference theory}
%
Since the pioneering work by Samuelson~\cite{samuelson1948consumption}, reveal preference theory has become a well-established economic theory that
analyzes choices made by individuals, particularly to understand consumer behavior.
It typically assumes that each consumer decides to buy a bundle of goods, among several alternatives, on the basis of a (concave) nondecreasing utility
function~\cite{koo1963empirical,mas1977recoverability,afriat1967construction}.
The works most closely related to ours~\cite{beigman2006learning,zadimoghaddam2012efficiently,echenique2011revealed,balcan2014learning,uchange} aim to
develop efficient algorithms to estimate utility functions from revealed preference data as well as analyze their sample complexity.
However, their problem setting is very different from ours:
(i) the utility a consumer obtains from buying a bundle of goods is deterministic, however, the feedback a social media user receives from her followers varies
randomly and, thus, the utility a user obtains from sharing a story is stochastic;
(ii) a consumer can evaluate the utility of a bundle of goods exactly, however, a social media user needs to guess the utility she will obtain from sharing a story
on the basis of an estimation of her followers'{} preferences from the feedback she received in the past;
and, (iii) each consumer'{}s decision is independent, however, each social media user'{}s decision is part of a sequential decision making process.

\xhdr{Smart broadcasting}
In recent years, there has been some work on smart broadcasting~\cite{spasojevic2015post, redqueen17wsdm, smart16, jmlr}, which aims to find the times when a user should post
to receive more views, likes, comments, or shares from her followers.
In smart broadcasting, there is also a user who aims to maximize the impact of the stories she shares, however, the focus is on \emph{when} to share while our focus is on \emph{what} to share.
In addition, algorithms for smart broadcasting are based on temporal point processes and stochastic optimal control, while we resort to online learning techniques and convex optimization.


\xhdr{Multi-armed bandits}
The proof techniques used for deriving regret bounds in bandit problems~\cite{tspaper,auer2007logarithmic,jaksch2010near,agrawal2013thompson} are related to the ones we use to derive the
regret bounds in our work.
However, there are several key differences:
(i) we allow a user to utilize the feedback her followers provide to other users to estimate her followers'{} preferences, whereas, in a traditional bandit setting, a user could only utilize the feedback
she receives on the stories she shares;
(ii) our proof techniques allow for users with several followers, in contrast, in a traditional bandit setting, it would only allow for users with one follower.


\vspace{-1mm}
\section{Feedback~model~of~posting~behavior}
\label{sec:model}
%
In this section, we introduce our feedback model of posting behavior, starting from the problem setting it is designed for. 
%

%
%

%
%
\vspace{-1mm}
\subsection{Problem Setting}
\vspace{-1mm}
Let $u$ be a social media user and $\Ncal(u)$ be her set of followers. Then, at each time $t \in \{1, \ldots, T\}$, the user shares
a story from a topic $c_t \in \Ccal$, with $|\Ccal| = K$,
and each of her followers decides whether to give (or not to give) feedback.\footnote{For simplicity, we assume that receiving
feedback is always positive, \eg, we assume that receiving a like is always positive. Our formulation can
be easily adapted to scenarios with positive and negative feedback, \eg, upvotes and downvotes.}
%
Moreover, for each follower $v \in \Ncal(u)$, we denote the feedback she gave (or did not give) to user $u$ and, possibly, to any other user in the social media
platform\footnote{We will study two scenarios: (i) the user has access only to the feedback she has received; and (ii) the user has access to the feedback she has
received as well as the feedback her followers have given to other users.}, up to time $t-1$ by
$$\Hcal_{v}(t) = \{ (c, l(c)) \, | \, \mbox{the story was posted before} \, t \},$$
where $c$ denotes the topic of the story, $l(c) = 1$ means that she gave feedback to the story and $l(c)= 0$ otherwise. We denote the collection of  feedback from followers $\Ncal(u)$ as $\Hcal(t)= (\Hcal_{v}(t))_{v \in \Ncal(u)}$.

At time $t \in \{1, \ldots, T\}$, we assume that the user $u$ samples the topic $c$ of the story she shares from a categorical distribution $c_t \sim p(c | \Hcal(t))$, which may depend on the history of her followers'{} feedback.
Also, given a topic $c$, each follower $v$ gives feedback with conditional probability $p_{v}(l(c) = 1 | c) = q_{cv}$, where we can think of $q_{cv}$ as follower
$v$'{}s \emph{preference} for topic $c$.
Here, we assume that, in general, followers may differ in their preferences, \ie, $q_{cv} \neq q_{cv'}$ for $v \neq v'$.
Moreover, we denote $\qb_c=(q_{cv})_{ v\in\Ncal(u) }$ and $\Qb=(q_{cv})_{ c \in\Ccal, v\in\Ncal(u)}$.

%
%
%
%
%
%
%
%
\vspace{-1mm}
\subsection{Utility Maximization Feedback Model}
\vspace{-1mm}
We assume that user $u$ aims to find (and utilize) the categorical distribution $p^{*}(c | \Hcal(t))$ that maximizes a (linear) utility 
function $\us(T)$, defined as
%
\begin{align}
\us(T) = \EE \left[ \sum_{t\in[T]} \left(\sum_{v \in \Ncal(u)} a_v l_v(c_t)+a_u x_{c_t}\right) \right] \label{eq:basic0}
\end{align}
where
%
the expectation is over the topics $c_t \sim p^{*}(c | \Hcal(t))$ of the stories the user shares and the feedback
$l_v(c_t) \sim \text{Bernoulli}(q_{c_t v})$.
The weights $a_v \geq 0$ model the importance that user $u$ gives to the feedback she receives from
follower $v$,
the parameters $0 \leq x_{c_t} \leq 1$ encode user $u$'s preference for topic $c_t$,
the weight $a_u \geq 0$ models the importance she gives to her own preferences,
and we assume that $\sum_{v \in \Ncal(u)} a_v + a_u = 1$.
In the above utility function, the greater the importance $a_v$ user $u$ gives to the feedback she receives from a specific
follower $v$, the greater the utility gain she will obtain from posting a story from a topic that follower $v$ prefers.
Similarly, the greater the importance $a_u$ she gives to her own preferences $\xb = (x_{c})_{c \in \Ccal}$, the greater the utility gain she will obtain from
just posting a story from a topic she prefers.

Next, using the linearity of expectation and the law of iterated expectation, we can rewrite Eq.~\ref{eq:basic0} as
follows:
%
%
\begin{align}
\us(T) &=  \sum_{t\in[T]} \sum_{v \in \Ncal(u)} \EE_{c_t \sim p^{*}}\left[ \EE_{l_v(c_t) | c_t}\left[ a_v l_v(c_t)+a_u x_{c_t} \right]\right]   \nonumber \\
&=\sum_{t\in[T]} \sum_{v \in \Ncal(u)} \EE_{c_t \sim p^{*}}\left[ a_v q_{c_t v}+a_u x_{c_t} \right] \nonumber \\
&= \sum_{t\in[T]} \sum_{c \in \Ccal} p^{*}(c | \Hcal(t)) \left( \ab^{\dagger} \qb_c + a_u x_{c} \right) \label{problem:basic}
\end{align}
where $\ab = (a_v)_{v \in \Ncal(u)}$ and $\dagger$ denotes the transpose operator.

Finally, we can formally state the utility maximization problem user $u$ aims to solve as:
\begin{equation} \label{eq:utility-maximization}
\begin{aligned}
\underset{p(c | \Hcal(t))}{\text{maximize}} & \quad \sum_{t\in[T]} \sum_{c \in \Ccal} p(c | \Hcal(t)) \left( \ab^{\dagger} \qb_c + a_u x_{c} \right) \\
\text{subject to} & \quad 0 \leq p(c | \Hcal(t)) \leq 1 \quad \forall c \in \Ccal,\\
& \sum_{c \in \Ccal} p(c | \Hcal(t)) = 1,
\end{aligned}
\end{equation}
where the followers'{} preferences $\Qb = (\qb_c)_{c \in \Ccal}$ are generally unknown. 

\vspace{-1mm}
\section{Solving the utility maximization problem}
\label{sec:properties}
In this section, we develop an efficient algorithm to solve the utility maximization problem defined by Eq.~\ref{eq:utility-maximization}
and study its theoretical guarantees in a variety of settings.
%

\vspace{-1mm}
\subsection{Known Preferences}
\vspace{-1mm}
As a warm up, we first assume that the user $u$ knows her followers'{} preferences $\Qb$.
%
In this setting, it readily follows that the optimal distribution $p^{*}(c | \Hcal(t))$ that
maximizes Eq.~\ref{eq:utility-maximization} does not depend on the feedback history and is given by
\begin{equation}\label{eq:popt}
p^{*}(c | \Hcal(t)) = p^{*}(c) = \II\left[ c = \argmax_{c'} \left( \ab^{\dagger} \qb _{c'}+ a_u x_{c'}) \right) \right],
\end{equation}
where $\II(\cdot)$ denotes the indicator function. In this case, note that the optimal mechanism for maximizing utility is deterministic (assuming ties over different topics are broken in a predetermined way).

%

\vspace{-1mm}
\subsection{Unknown Preferences: Exploration Exploitation Trade-off}
\vspace{-1mm}
%

In this section, we consider a more realistic setting where the user $u$ does not know the preferences $\Qb$. For this setting, we assume that the user can access historical feedback data to estimate these unknown preferences. When user $u$ does not know her followers'{} preferences, she needs to trade off exploitation, \ie, maximizing utility, and exploration, \ie, learning about
her followers'{} preferences $\Qb$ from historical feedback data.

To this aim, for every topic $c \in \Ccal$ and follower $v \in \Ncal(u)$, we assume a Beta prior over the preference parameter $q_{cv} \sim Beta(\alpha,\beta)$.
Under this assumption, at each time $t$, we can use $\Hcal_v(t)$ to update the distribution of parameter $q_{cv}(t)$ as:
\begin{equation}
p(q_{cv}(t) | \Hcal_v(t)) = Beta(\alpha+n_{cv}(t),\beta+\bar{n}_{cv}(t)),
\end{equation}
where
\begin{align*}
 n_{cv}(t) &= |\{ (c', l(c')) \in\Hcal_v(t) \,|\, l(c')=1, c'=c\}|, \\
 \bar{n}_{cv}(t) &= |\{ (c', l(c')) \in\Hcal_v(t) \,|\, l(c')=0, c'=c \}|.
\end{align*}

Then, at the beginning of time $t$, we can estimate the value of each preference parameter $q_{cv}(t)$ in the following two ways:
\begin{itemize}[leftmargin=0.7cm]
\item[I.] using point estimates, \ie,
\begin{equation} \label{eq:map-q}
\hat{q}_{cv}(t) = \argmax p(q_{cv}(t) | \Hcal_v(t)) = \frac{\alpha+n_{cv}(t)-1}{\alpha+\beta+n_{cv}(t)+\bar{n}_{cv}(t)-2},
\end{equation}
\item[II.] via sampling from posterior, \ie,
\begin{equation}
\hat{q}_{cv}(t) \sim p(q_{cv}(t) | \Hcal_v(t)).
\end{equation}
\end{itemize}

Given these estimates, we select what to share next using an empirical
approximation to Eq.~\ref{eq:popt}, \ie,
\begin{equation}
\hat{p}(c | \Hcal(t)) = \II\left[ c = \argmax_{c'} \left( \ab^{\dagger} \hat{\qb} _{c'}(t)+ a_u x_{c'}) \right) \right].
\end{equation}
where $\hat{\qb} _{c'}(t) = (\hat{q}_{c' v}(t))_{v \in \Ncal(u)}$. Algorithm~\ref{alg:sampling} summarizes the complete procedure. Within the algorithm, \textsc{Share} shares a story from a given topic $c_t$, \textsc{GatherFeedback} gathers the feedback from a follower $v$, and \textsc{Estimate} returns an estimate of the preferences parameters using either point estimates or posterior samples.

\begin{algorithm}[t]                    
	\caption{Utility maximization for unknown preferences}
	\label{alg:sampling}
	\begin{algorithmic}[1]
	\STATE \textbf{Input: } Prior parameters $\alpha,\ \beta$.
	\STATE \textbf{Output: } Category $\{c_t\}_{t\in[T]}$.
	\FOR{$v\in\Ncal(u)$}
	\STATE $\Hcal_v(t) \leftarrow \emptyset$;
	\ENDFOR
	\FOR{$t \in \{1, \ldots, T\} $}
	  \STATE $c_t = \argmax_{c'} \big(\sum_{v\in\Ncal(u)} a_v \hat{q}_{c'v}(t)+ a_u x_{c'}\big)$;
	  \STATE $\textsc{Share}(c_t)$;
	  \STATE \texttt{\small /* Gather feedback from $u$'{}s followers */}
	  \FOR{$v\in\Ncal(u)$}
	   \STATE $\Hcal_v(t+1) \leftarrow \Hcal_v(t) \cup \textsc{GatherFeedback}(v)$;
	    \FOR{$c \in \Ccal$}
	      \STATE $\hat{q}_{cv}(t+1) = \textsc{Estimate}(\alpha, \beta, c, \Hcal_v(t+1))$;
	    \ENDFOR
	\ENDFOR
	\ENDFOR
	\end{algorithmic}
\end{algorithm}

\vspace{-1mm}
\subsection{Unknown Preferences:  Analysis}
\vspace{-1mm}
In this section, we analyze the theoretical guarantees of Algorithm~\ref{alg:sampling} in terms of regret $R(T)$, which we define as follows:
\begin{equation}
R(T) = \textsc{Util}(T) - \textsc{Util}^{*}(T),
\end{equation}
where $\textsc{Util}(T)$ is the utility achieved by Algorithm~\ref{alg:sampling} and $\textsc{UTIL}^{*}(T)$ is the utility achieved by the optimal categorical distribution
$p^{*}$, given by Eq.~\ref{eq:popt}, under the true preference parameters $q_{cv}$. Note that an algorithm is called a \emph{no-regret} algorithm if the regret grows sublinearly (\ie, $o(T)$) which implies that the algorithm's average performance converges to that of the optimal algorithm.

Note that the utility $\textsc{Util}(T)$ depends on the quality of the estimates $\hat{q}_{cv}(t)$, which in turn depend on (i) the estimation method (\ie, point estimates vs posterior
samples) and (ii) whether each follower'{}s feedback history $\Hcal_{v}(t)$ only contains the feedback the follower gives to user $u$ or it also contains the feedback she gives
to others. Next, we study several cases separately.


%
\subsubsection{Point estimates}
If the user only has access to the feedback she receives from her followers, we have the following negative result (proven in Appendix~\ref{app:thm:map-wofeedback}):
\begin{theorem}\label{thm:map-wofeedback}
Assume user $u$ uses point estimates $\hat{q}_{cv}(t)$ and she can only access the feedback she receives from her followers. Then, Algorithm~\ref{alg:sampling} suffers
linear  regret $\Theta(T)$.
\end{theorem}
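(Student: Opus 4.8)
The plan is to exploit the one structural weakness of Algorithm~\ref{alg:sampling} when it uses point estimates: it never explores. First I would record the elementary fact behind this. For any topic $c$ that has not been posted before time $t$ we have $n_{cv}(t)=\bar{n}_{cv}(t)=0$ for every follower $v$, so by Eq.~\ref{eq:map-q} the estimate $\hat{q}_{cv}(t)$ is pinned to the prior mode $\bar{q}:=(\alpha-1)/(\alpha+\beta-2)$; the estimate of a topic moves only after the algorithm actually posts that topic. In particular, at $t=1$ every $\hat{q}_{cv}(1)=\bar{q}$, so the first posted topic is $c_1=\argmax_{c}\, a_u x_{c}$, and from then on the estimated value of every topic $c\neq c_1$ stays frozen at $\bar q(\ab^{\dagger}\mathbf{1})+a_u x_{c}$, \emph{unless and until} the running estimated value of $c_1$ falls below that of some competitor. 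So the theorem — read as ``there is a problem instance on which Algorithm~\ref{alg:sampling} incurs regret linear in $T$'' — reduces to engineering an instance on which this lock-in on $c_1$ is permanent and carries a fixed per-round cost.

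Next I would exhibit such an instance, chosen so that the whole run is deterministic (no need to reason about feedback fluctuations). Take $|\Ccal|=2$ and a single follower, pick the prior with $\alpha=\beta$ so $\bar q=1/2$, set the own-preferences $x_1>x_2$ so that $c_1$ is topic $1$, set the follower's preferences $q_{11}=0$ and $q_{21}=1$, and pick weights $(a_1,a_u)$ with $a_1+a_u=1$, $a_1>a_u$, and $a_u>a_1/2$ (e.g.\ $x_1=1,\ x_2=0,\ a_1=0.6,\ a_u=0.4$). Two properties then need checking. First, topic $2$ is the strict optimum: its true value is $a_1 q_{21}+a_u x_2 = a_1$, while topic $1$'s is $a_1 q_{11}+a_u x_1 = a_u x_1$, a strictly positive gap $\Delta=a_1-a_u>0$ per round. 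Second, because $q_{11}$ and $q_{21}$ are $0/1$, all feedback is deterministic, so the algorithm's trajectory is a fixed sequence.

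The remaining step is a two-line computation verifying that the algorithm posts topic $1$ at \emph{every} round. Topic $2$ is never posted, so its estimated value is the constant $a_1\bar q + a_u x_2 = a_1/2$. Topic $1$'s estimated value at time $t$ is $a_1\hat q_{11}(t)+a_u x_1$ with $\hat q_{11}(t)=1/(t+1)\ge 0$, hence at least $a_u x_1 > a_1/2$ for every $t$ by the choice $a_u>a_1/2$ (and at $t=1$ both estimates use $\bar q$, so topic $1$ wins by the $x_1>x_2$ margin, with no appeal to tie-breaking). Thus topic $1$ is posted for all $t$, giving $\textsc{Util}(T)=T\,a_u x_1$ and $\textsc{Util}^{*}(T)=T\,a_1$, so $|R(T)| = \Delta\,T$. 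Combined with the trivial bound $|R(T)|\le T$ valid on every instance (each per-round utility lies in $[0,1]$), this is exactly the claimed $\Theta(T)$.

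I expect the only genuinely delicate point to be the tie among all estimates at $t=1$, which I sidestep by letting the own-preference term $a_u x_c$ strictly favour the ``bad'' topic, so the construction is independent of how ties are resolved. If one instead wants an instance with $a_u=0$, or wants to argue that linear regret is not an artefact of a single hand-picked example, the alternative is probabilistic: choose the first-posted topic with $q_{c_1 v}$ strictly above $\bar q$ and a strictly better topic $c^{*}$; then ``$\hat q_{c_1 v}(t)$ never drops enough to trigger a switch'' is precisely the event that a random walk with positive drift stays above a level for all time, which has probability at least some constant $p_0>0$, yielding $\EE[R(T)]\le -p_0\Delta T + O(1)$. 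In either form the crux is the same — with point estimates nothing ever forces the algorithm to sample an untried topic — and the rest is bookkeeping.
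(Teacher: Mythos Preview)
Your construction is correct and your argument goes through, but the route is genuinely different from the paper's. The paper takes a \emph{stochastic} instance: equal own-preferences $x_1=x_2$, prior $\alpha=\beta=3$, and a suboptimal topic whose true preference is exactly $q_{2v}=\tfrac12$. At $t=1$ all estimates coincide, so the paper relies on \emph{random} tie-breaking to put positive mass on the wrong topic; once the wrong topic is posted and liked (probability $\tfrac14$), the sequence $n_{2v}(t)-\bar n_{2v}(t)$ behaves as a simple symmetric random walk started at $+1$, and the classical fact that the expected first return time to the origin is infinite yields $\EE[T_2]\ge T/4$. Your construction instead makes the feedback \emph{deterministic} ($q_{11}=0$, $q_{21}=1$) and uses the own-preference term $a_u x_c$ to break the initial tie cleanly, so the whole trajectory is a fixed sequence and the proof reduces to a two-line inequality with no appeal to random-walk theory or to any tie-breaking convention. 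What your approach buys is simplicity and robustness to the tie-breaking rule; what the paper's approach buys is that the pathology is exhibited with non-degenerate (interior) preferences and with $x_1=x_2$, showing that linear regret is not an artefact of $0/1$ boundary values or of the own-preference channel. Your closing probabilistic sketch (random walk with positive drift never crossing a level) is in the same spirit as the paper's argument, though the paper uses the zero-drift case and the infinite expected return time rather than the positive-drift non-return probability.
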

Perhaps surprisingly, Algorithm~\ref{alg:sampling} with point estimates can actually achieve  sublinear regret $O(\sqrt{T})$ if the user has access to both the feedback
her followers give to her as well as to others, as formalized by the following theorem (proven in Appendix~\ref{app:regret-full-feedback}):
%
%
\begin{theorem}\label{thm:map-feedback}
Assume user $u$ uses point estimates for the followers'{} preferences $\hat{q}_{cv}(t)$ and she can access both the feedback her followers give to her as well as to others. Furthermore, the amount of feedback each of her followers $v$ give to others per topic $c$ follows a Poisson distribution with rate $\mu_{cv} > 0$, and let $d = \alpha + \beta$.
Then, for $2 \leq d < 3$, Algorithm~\ref{alg:sampling} achieves  regret
\begin{equation*}
O\left(\underset{c \in \Ccal, v \in \Ncal(u)}{\sum}\frac{\sqrt{T-1}\sqrt{1-e^{-\mu_{cv} (T-1)}}}{2\sqrt{(d-2)\mu_{cv}}} \right),
\end{equation*}
%
%
and, for $d \geq 3$, it achieves:
\begin{equation*}
O\left(\underset{c \in \Ccal, v \in \Ncal(u)}{\sum}\frac{T-1}{2\sqrt{(T-1)\mu_{cv} t+d-3}+\sqrt{d-3}}\right)
\end{equation*}
\end{theorem}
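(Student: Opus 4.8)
The plan is to reduce the regret to a time-summed estimation error and then exploit the key structural fact that external feedback refreshes \emph{every} estimate $\hat q_{cv}(t)$ at a positive (Poisson) rate---so, unlike in Theorem~\ref{thm:map-wofeedback}, no gap condition is needed and each estimation error decays polynomially in $t$. First I would observe, using the optimal policy from Eq.~\ref{eq:popt} and the fact that Algorithm~\ref{alg:sampling} plays $c_t=\argmax_{c'}(\ab^{\dagger}\hat\qb_{c'}(t)+a_u x_{c'})$, that the regret $\textsc{Util}^{*}(T)-\textsc{Util}(T)$ equals
\begin{align*}
\sum_{t\in[T]}\big[(\ab^{\dagger}\qb_{c^{*}}+a_u x_{c^{*}})-(\ab^{\dagger}\qb_{c_t}+a_u x_{c_t})\big]\ \le\ \sum_{t\in[T]}\sum_{v\in\Ncal(u)} a_v\big(|\hat q_{c_t v}(t)-q_{c_t v}|+|\hat q_{c^{*} v}(t)-q_{c^{*} v}|\big),
\end{align*}
where $c^{*}$ is the optimal topic and the inequality uses $\ab^{\dagger}\hat\qb_{c_t}(t)+a_u x_{c_t}\ge\ab^{\dagger}\hat\qb_{c^{*}}(t)+a_u x_{c^{*}}$. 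Bounding the right-hand side by $2\sum_{c\in\Ccal}\sum_{v} a_v|\hat q_{cv}(t)-q_{cv}|$ and taking expectations reduces everything to controlling $\EE[\,|\hat q_{cv}(t)-q_{cv}|\,]$ for each $c,v,t$.

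Next I would analyze the MAP estimate. Conditioned on the total number of observations $m_{cv}(t)=n_{cv}(t)+\bar n_{cv}(t)=m$, the count $n_{cv}(t)$ is $\mathrm{Binomial}(m,q_{cv})$, so by Eq.~\ref{eq:map-q} the estimate $\hat q_{cv}(t)=\frac{\alpha+n_{cv}(t)-1}{d-2+m}$ has bias $\frac{(\alpha-1)(1-q_{cv})-q_{cv}(\beta-1)}{d-2+m}$, of magnitude at most $\frac{d-2}{d-2+m}$, and variance $\frac{m\,q_{cv}(1-q_{cv})}{(d-2+m)^2}\le\frac{m/4}{(d-2+m)^2}$; hence $\EE[(\hat q_{cv}(t)-q_{cv})^2\mid m]\le\frac{(d-2)^2+m/4}{(d-2+m)^2}$ and $\EE[\,|\hat q_{cv}(t)-q_{cv}|\mid m\,]\le\frac{\sqrt{(d-2)^2+m/4}}{d-2+m}$. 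The hypothesis $d=\alpha+\beta\ge2$ is what keeps this denominator well defined and lets one absorb $\frac{m}{(d-2+m)^2}$ into $\frac{1}{d-2+m}$; and the split $2\le d<3$ versus $d\ge3$ is forced by how one controls the squared-bias piece $\frac{(d-2)^2}{(d-2+m)^2}$---either by $\frac{1}{d-2}\cdot\frac{1}{d-2+m}$ when the prior is weak ($d-3<0$), or by a $\frac{1}{d-3+m}$-type bound when it is strong---which is exactly what later produces the $\sqrt{(d-2)\mu_{cv}}$ in the first bound and the $\sqrt{(T-1)\mu_{cv}t+d-3}$ and $\sqrt{d-3}$ in the second.

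The crucial step, and the one that distinguishes this from Theorem~\ref{thm:map-wofeedback}, is that the feedback follower $v$ gives to \emph{others} on topic $c$ arrives as a Poisson process of rate $\mu_{cv}>0$, so $m_{cv}(t)$ stochastically dominates a $\mathrm{Poisson}(\mu_{cv}(t-1))$ variable no matter which topics $u$ posts; since the per-observation bound above is decreasing in $m$, the user's own posts only help, which a short coupling argument makes precise. Plugging this in, I would bound $\EE[\,|\hat q_{cv}(t)-q_{cv}|\,]$ by $\EE_{m\sim\mathrm{Poisson}(\mu_{cv}(t-1))}\!\big[\tfrac{\sqrt{(d-2)^2+m/4}}{d-2+m}\big]$, evaluate the Poisson averages with identities such as $\EE_{m\sim\mathrm{Poisson}(\lambda)}[\tfrac{1}{m+1}]=\tfrac{1-e^{-\lambda}}{\lambda}$ and its generalization $\int_0^1(1-u)^{d-3}e^{-\lambda u}\,du$, together with $\EE|\hat q-q|\le\sqrt{\EE(\hat q-q)^2}$---this is where the $1-e^{-\mu_{cv}(T-1)}$ factors come from---and then sum over $t\in[T]$, pulling out $\sqrt{T-1}$ via Cauchy--Schwarz, $\sum_t\EE|\hat q_{cv}(t)-q_{cv}|\le\sqrt{T-1}\big(\sum_t\EE(\hat q_{cv}(t)-q_{cv})^2\big)^{1/2}$, and comparing the remaining series to an integral; a final sum over $(c,v)$ gives the two stated bounds. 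I expect the main obstacle to be exactly this last bookkeeping---turning the Poisson expectation and the induced time series into the precise closed forms stated (not merely $O(\sqrt{T/\mu_{cv}})$) and handling the low end ($d$ close to $2$, small $m$) cleanly; the coupling argument ``extra self-generated observations only help'' is conceptually routine but must also be stated with care.
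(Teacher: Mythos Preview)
Your plan is essentially the paper's proof: the same regret decomposition to $2\sum_{c,v}|q_{cv}-\hat q_{cv}(t)|$, the same bias--variance computation for the MAP estimate conditioned on the observation count, the same monotonicity/coupling step replacing the total count by the external Poisson count $M_{cv}(t)$, and the same Poisson identity $\EE_{m\sim\mathrm{Poisson}(\lambda)}[1/(m+1)]=(1-e^{-\lambda})/\lambda$ (and its $\delta$-shifted version) to get the $1-e^{-\mu_{cv}(T-1)}$ factors.

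The one place where you diverge from the paper is the time aggregation. You propose Cauchy--Schwarz over $t$, i.e.\ $\sum_t\EE|\hat q_{cv}(t)-q_{cv}|\le\sqrt{T-1}\big(\sum_t\EE(\hat q_{cv}(t)-q_{cv})^2\big)^{1/2}$. The paper instead applies $\EE|X|\le\sqrt{\EE X^2}$ \emph{pointwise} at each $t$ and then sums the square roots, i.e.\ it bounds $\sum_t\sqrt{\EE_{M_{cv}(t)}\big[\tfrac{C}{M_{cv}(t)+\delta}\big]}$ directly (this is packaged as a separate lemma, with the two cases $\delta=d-2<1$ and $\delta\ge1$). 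Since $\EE_{M}[1/(M+\delta)]\asymp 1/(\mu_{cv}t)$, your Cauchy--Schwarz route gives $\sqrt{T-1}\sqrt{\sum_t 1/(\mu_{cv}t)}\asymp\sqrt{(T\log T)/\mu_{cv}}$, whereas the paper's route gives $\sum_t 1/\sqrt{\mu_{cv}t}\asymp\sqrt{T/\mu_{cv}}$. So your version would carry an extra $\sqrt{\log T}$ and would not reproduce the stated closed forms; this is precisely the ``last bookkeeping'' obstacle you anticipated, and the fix is simply to swap the order---square-root first, then sum---rather than Cauchy--Schwarz over time.
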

%
Here, note that, whenever her followers do not give feedback to others in at least one topic, \ie, $\sum_{v \in \Ncal(u)} \mu_{cv}=0$ for some $c \in \Ccal$,
Algorithm~\ref{alg:sampling} will suffer linear regret, which is in agreement with Theorem~\ref{thm:map-wofeedback}.

\subsubsection{Posterior samples}
%
If the user uses posterior samples, she is better off. In this case, Algorithm~\ref{alg:sampling} achieves sublinear regret independently on whether $\Hcal_v(t)$ contains
only the feedback she receives or also the feedback her followers give to others. More formally, we have the following Theorem and Corollary (proven in Appendix~\ref{app:regret-full-feedback}):
%
%
\begin{theorem}\label{thm:post}
Assume user $u$ uses posterior samples to estimate the followers'{} preferences $\hat{q}_{cv}(t)$, she can access both the feedback her followers give to her as well as to others,
and the amount of feedback each of her followers $v$ give to others per topic $c$ follows a Poisson distribution with rate $\mu_{cv} > 0$. Then, Algorithm~\ref{alg:sampling}
has regret
%
%
\begin{equation*}
O\left( \log \left(1+\sum_{v\in\Ncal(u)}\frac{1-\exp(-\mu_{cv} \theta_m T)}{\mu_{cv} \theta_m }\right)\right),
\end{equation*}
where $\theta_m$ depends on the parameters $\Qb$ in non-trivial way.
\end{theorem}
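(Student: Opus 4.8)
The plan is to carry out a Thompson-sampling-style regret analysis, adapted to the fact that, with access to the feedback given to others, every follower's preference parameter receives a steady stream of ``free'' observations. Fix the true optimum $c^{*}=\argmax_{c}\big(\ab^{\dagger}\qb_{c}+a_{u}x_{c}\big)$, assumed unique, and write $\Delta_{c}=(\ab^{\dagger}\qb_{c^{*}}+a_{u}x_{c^{*}})-(\ab^{\dagger}\qb_{c}+a_{u}x_{c})\ge 0$ for the suboptimality gap of topic $c$. Since the per-step loss incurred by posting $c$ is exactly $\Delta_{c}$, it suffices to bound $\EE[N_{c}(T)]$, the expected number of rounds in which a topic $c\neq c^{*}$ is selected, and then sum $\sum_{c\neq c^{*}}\Delta_{c}\,\EE[N_{c}(T)]$. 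The first step is the standard decomposition: at round $t$ the algorithm selects $c$ only if either (i) some posterior sample overshoots, $\ab^{\dagger}\hat{\qb}_{c}(t)+a_{u}x_{c}\ge \ab^{\dagger}\qb_{c}+a_{u}x_{c}+\Delta_{c}/2$, or (ii) some posterior sample of the optimum undershoots, $\ab^{\dagger}\hat{\qb}_{c^{*}}(t)+a_{u}x_{c^{*}}\le \ab^{\dagger}\qb_{c^{*}}+a_{u}x_{c^{*}}-\Delta_{c}/2$; since $\sum_{v}a_{v}\le 1$, both events reduce to a single coordinate $\hat{q}_{cv}(t)$ (resp.\ $\hat{q}_{c^{*}v}(t)$) deviating from its true value by a gap-proportional margin $\epsilon_{c}\asymp\Delta_{c}$.

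The second step is the concentration estimate. At round $t$ the posterior of $q_{cv}$ is $\mathrm{Beta}(\alpha+n_{cv}(t),\beta+\bar n_{cv}(t))$, whose total count $m_{cv}(t)=n_{cv}(t)+\bar n_{cv}(t)$ splits into the contribution of $u$'s own posts and the contribution of the feedback $v$ gives to others, the latter being $\mathrm{Poisson}(\mu_{cv}(t-1))$ and independent of $u$'s posting sequence. Combining a Chernoff bound for the empirical frequency $n_{cv}(t)/m_{cv}(t)$, the concentration/anti-concentration of a Beta sample around its posterior mean, and the Poisson moment generating function, the probability of the margin-violation event for $(c,v)$ at round $t$ is at most $\EE\!\big[e^{-\theta_{m}m_{cv}(t)}\big]\le \EE\!\big[e^{-\theta_{m}N_{c}(t)}\big]\cdot e^{-\mu_{cv}(1-e^{-\theta_{m}})(t-1)}$, where $\theta_{m}>0$ is the large-deviation rate for separating the topics' utilities and therefore depends on $\Qb$ (and on $\ab$, $a_u$, $\xb$) through the margins $\epsilon_{c}$ in the ``non-trivial'' way stated. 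The essential gain over Theorems~\ref{thm:map-wofeedback}--\ref{thm:map-feedback} is that the optimal topic $c^{*}$ enjoys exactly the same $\mathrm{Poisson}(\mu_{c^{*}v}(t-1))$ free observations, so event (ii) is exponentially unlikely in $t$ \emph{without} having to argue that $u$ posted $c^{*}$ often --- the step that forces a $\log T$ (via a union bound over the horizon) in the own-feedback-only posterior-sampling analysis.

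The third step is the summation. Plugging the estimate into $\EE[N_{c}(T)]=\sum_{t\le T}\Pr[c_{t}=c]$, the external part yields the geometric sum $\sum_{t\le T}e^{-\mu_{cv}(1-e^{-\theta_{m}})(t-1)}\asymp \big(1-e^{-\mu_{cv}\theta_{m}T}\big)/(\mu_{cv}\theta_{m})$, while the own-feedback factor $\EE\!\big[e^{-\theta_{m}N_{c}(t)}\big]$ is handled by a self-bounding argument: since $N_{c}(t)$ is monotone and is itself the quantity being estimated, one feeds the concentration bound back into the sum, and the resulting fixed-point inequality replaces the geometric sum by its logarithm, giving $\EE[N_{c}(T)]=O\big(\log\big(1+\sum_{v}(1-e^{-\mu_{cv}\theta_{m}T})/(\mu_{cv}\theta_{m})\big)\big)$; multiplying by $\Delta_{c}$ and summing over $c\neq c^{*}$ gives the claimed bound (the theorem records the per-topic form). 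As consistency checks, letting $\mu_{cv}\to 0$ collapses the bound to $O(\log(1+|\Ncal(u)|\,T))=O(\log T)$, matching the no-external-feedback regime (the Corollary), whereas for $\mu_{cv}$ bounded away from $0$ the argument of the logarithm saturates at a constant, giving $O(1)$.

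The step I expect to be the main obstacle is precisely this self-referential/potential step: one must decouple $\EE\!\big[e^{-\theta_{m}N_{c}(t)}\big]$ from the algorithm's adaptively chosen, history-dependent posting sequence while keeping the external-feedback contribution inside the logarithm, and one must make the exponent $\theta_{m}$ explicit --- it is essentially the best Chernoff exponent separating each suboptimal topic's utility from the optimum, so a closed form requires care with the Bernoulli/Beta large-deviation rate, with how the $a_{v}$ weights aggregate the per-follower deviations, and with tie-breaking in the $\argmax$. A secondary nuisance is the small-count / prior regime (few observations, in particular small $d=\alpha+\beta$), where the Beta posterior is not yet concentrated; this is what contributes the ``$1+$'' inside the logarithm and must be carved out before the exponential tail bounds apply.
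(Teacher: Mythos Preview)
Your route is genuinely different from the paper's. The paper follows the Agrawal--Goyal Thompson-sampling template: for each suboptimal $c$ it introduces two thresholds $\sigma_c<\rho_c$ and splits $\PP(c_t=c)$ into three pieces --- (1) empirical means below $\sigma_c$ and sample below $\rho_c$, (2) empirical means below $\sigma_c$ but sample above $\rho_c$, (3) some empirical mean above $\sigma_c$. Piece (1) is handled by the change-of-arm inequality $\PP(c_t=c,\cdot)\le\frac{1-p_{c,t}}{p_{c,t}}\PP(c_t=1,\cdot)$ together with an anti-concentration bound on $\EE[1/p_{c,\tau_m}]$ (their Lemmas~\ref{lem:ineq1}--\ref{lem:ineq2}); this piece is $O(1)$ and, importantly, does \emph{not} use the external feedback at all. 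Piece (3) is $O(1)$ by Chernoff. The logarithm comes entirely from piece (2), via a threshold trick: split according to whether $k_c(t-1)<L_c(T)$ or $\ge L_c(T)$, bound the first part by $L_c(T)$ (counting plays of $c$), bound the second by $e^{-d_m L_c(T)}\sum_t\EE[e^{-d_m M_{cv}(t)}]$, and optimize $L_c(T)$.

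Your two-event overshoot/undershoot decomposition is simpler in that it sidesteps the change-of-arm lemma and the anti-concentration estimate altogether: because $c^{*}$ receives $\mathrm{Poisson}(\mu_{c^{*}v}(t-1))$ free observations, your event (ii) is directly summable to $O(1)$. That is a real simplification, but it \emph{relies on} $\mu_{c^{*}v}>0$; if you also want the Corollary (own-feedback only), you would have to reinstate the Agrawal--Goyal machinery, whereas the paper's proof degrades gracefully. Two further points. First, your diagnosis of where the $\log T$ lives in the own-feedback case is off: in the standard analysis the optimal-arm anti-concentration term is $O(1)$, and the $\log T$ comes from the suboptimal arm's sample overshooting (your event (i), the paper's piece (2)), not from a union bound on undershoots of $c^{*}$. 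Second, the step you flag as the obstacle --- turning the geometric sum into its logarithm via a ``self-bounding / fixed-point'' argument on $\EE[e^{-\theta_m N_c(t)}]$ --- is exactly where the paper does something concrete and different: it does not feed $N_c$ back into itself, it uses the threshold split $\{k_c(t-1)<L\}$ vs.\ $\{k_c(t-1)\ge L\}$ and then optimizes $L$. Your sketch would go through cleanly if you replace the vague fixed-point idea by this threshold argument (and replace the ``reduces to a single coordinate'' step by an honest union bound over $v$, which costs only a $|\Ncal(u)|$ factor).
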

\begin{corollary}\label{cor1}
Assume user $u$ uses posterior samples to estimate the followers'{} preferences $\hat{q}_{cv}(t)$ and she can only access the feedback she receives from her
followers. Then, Algorithm~\ref{alg:sampling} has regret $O(\log T)$.
\end{corollary}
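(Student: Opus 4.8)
The plan is to obtain Corollary~\ref{cor1} as a specialization of Theorem~\ref{thm:post}. The key observation is that the ``own‑feedback‑only'' setting is exactly the instance of the model underlying Theorem~\ref{thm:post} in which every follower gives no feedback to others, \ie, $\mu_{cv}=0$ for all $c\in\Ccal$ and $v\in\Ncal(u)$. Under this restriction the only way the history $\Hcal_v(t)$ acquires a new pair $(c,l(c))$ is when user $u$ herself posts topic $c$; consequently, for every follower $v$ the cumulative number of observations of the preference $q_{cv}$ available at time $t$ equals $N_c(t):=n_{cv}(t)+\bar n_{cv}(t)$, the number of times topic $c$ has been posted before $t$ --- a quantity that no longer depends on $v$ and is no longer randomized by exogenous feedback. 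I would therefore re‑run the proof of Theorem~\ref{thm:post} with the external Poisson counts set to zero throughout, so that wherever the analysis used an ``effective sample size'' of $N_c(t)$ inflated by $\Theta(\mu_{cv}t)$ extra observations, it now uses $N_c(t)$ alone.

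First I would recall the regret decomposition for posterior sampling: writing $c^{*}=\argmax_{c}(\ab^{\dagger}\qb_c+a_u x_c)$ and $\Delta_c=(\ab^{\dagger}\qb_{c^{*}}+a_u x_{c^{*}})-(\ab^{\dagger}\qb_c+a_u x_c)>0$ for $c\neq c^{*}$, the cumulative utility gap incurred by Algorithm~\ref{alg:sampling} is $R(T)=\sum_{c\neq c^{*}}\Delta_c\,\EE[N_c(T)]$. The heart of the argument is then to bound $\EE[N_c(T)]$ for each suboptimal $c$ via the standard two‑sided posterior‑sampling argument: anti‑concentration of the posterior sample at $c^{*}$ above the mid‑gap threshold $\ab^{\dagger}\qb_{c^{*}}+a_u x_{c^{*}}-\tfrac{1}{2}\Delta_c$, together with concentration of each suboptimal posterior sample $\ab^{\dagger}\hat{\qb}_c(t)+a_u x_c$ below it, both obtained from Beta tail bounds for the fixed prior $(\alpha,\beta)$ and then combined over followers using $\sum_{v}a_v\le 1$. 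This yields that once $N_c(t)=\Omega(\log T/\Delta_c^{2})$, topic $c$ is chosen with probability $O(1/T^{2})$, hence $\EE[N_c(T)]=O(\log T/\Delta_c^{2})$; summing over the at most $K$ topics gives $R(T)=O(K\log T/\Delta_{\min})=O(\log T)$. As a consistency check, this is precisely the bound of Theorem~\ref{thm:post} in the limit $\mu_{cv}\to 0^{+}$: since $\tfrac{1-\exp(-\mu_{cv}\theta_m T)}{\mu_{cv}\theta_m}\to T$, the bound there degrades continuously from $O(1)$ to $O\!\left(\log(1+|\Ncal(u)|\,T)\right)=O(\log T)$.

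I expect the main obstacle to be exactly this compound‑estimator step --- controlling deviations of the \emph{weighted} sum $\ab^{\dagger}\hat{\qb}_c(t)$ when the individual Beta posteriors may have very different (and small) sample sizes, and establishing that the gap‑dependent constant $\theta_m$, which encodes how the mixing of the followers' preferences through $\ab$ governs how fast a suboptimal topic becomes distinguishable, is well defined and strictly positive. A secondary, bookkeeping‑level obstacle --- if one prefers to present the corollary as a literal limit of Theorem~\ref{thm:post} rather than as a re‑derivation --- is to check that no step of that theorem's proof divides by $\mu_{cv}$ in an essential way before the final bound, so that the $\mu_{cv}=0$ instance is genuinely covered; the cleaner route is simply to carry the deterministic count $N_c(t)$ through the original argument, which is where the extra $\log T$ factor (relative to the $O(1)$ bound with exogenous feedback) comes from.
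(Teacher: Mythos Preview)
Your proposal is correct and takes essentially the same approach as the paper: the corollary is obtained by re-running the Agrawal--Goyal style decomposition used in the proof of Theorem~\ref{thm:post} with $\mu_{cv}=0$, where the only $\mu_{cv}$-dependent piece is the second term whose bound $L_c(T)=\tfrac{1}{d_m}\log\bigl(1+\sum_v\frac{1-\exp(-\mu_{cv}\theta_m T)}{\mu_{cv}\theta_m}\bigr)$ degenerates continuously to $\tfrac{1}{d_m}\log(1+|\Ncal(u)|T)=O(\log T)$, exactly as you note. Your anticipated obstacles are the right ones and are handled in the paper via the per-follower events $E^\eta_{cv}(t)$, $E^\theta_c(t)$ and the KL-based constant $d_m$ (playing the role of your $\theta_m$), so nothing in the argument divides by $\mu_{cv}$ before the final expression.
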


In summary, if user $u$ uses posterior samples instead of point estimates, she can effectively trade off exploitation---sharing stories for maximizing her utility---and exploration---sharing 
stories to estimate her followers' preferences. Moreover, if she has access to the feedback that her neighbors provide to others, she is better off using it. Such additional information helps 
her to maximize her utility more effectively in both cases---posterior samples and point estimates.

\vspace{-5mm}
\section{Utility estimation framework}
\label{sec:estimation}
\vspace{-2mm}
%

In this section, assume we observe both the stories user $u$ shared at each time $t \in \{1, \ldots, T\}$ and the feedback she received from her followers.
%
%
Then, our goal is to determine whether the user utilizes the feedback she receives from each of her followers to decide what to post next.
To this aim, we first find the model parameters that best fit the observed data and then determine its statistical significance using statistical 
hypothesis testing.
\vspace*{-1mm}
\subsection{Parameter estimation}
To find the weights $(a_v)_{v\in\Ncal(u)}$ and $a_u$ and parameters $(x_c)_{c \in \Ccal}$ in Eq.~\ref{eq:utility-maximization} that best fit the 
observed data, one could resort to maximum likelihood estimation, \ie,
\begin{equation} \label{eq:utility-estimation}
\begin{aligned}
\underset{\ab, a_u, \xb}{\text{maximize}} & \quad \sum_{t\in[T]} \log p^{*}(c_t | \Hcal(t)) \\
\text{subject to} & \quad a_u \geq 0, a_v \geq 0 \quad \forall v \in \Ncal(u)  \\
& \sum_{v \in \Ncal(u)} a_v + a_u = 1 \\
& \quad 0 \leq x_c \leq 1, \forall c \in \Ccal.
\end{aligned}
\end{equation}
where $p^{*}(c | \Hcal(t))$ implicitly depends on $(a_v)_{v\in\Ncal(u)}$ and $a_u$ since it is the distribution that maximizes the utility function. 
%

%
%
%
%
However, the above maximum likelihood estimation problem faces two serious challenges. 
First, it is stated in terms of either the optimal distribution $p^{*}(c | \Hcal(t))$ or, in practice,
$\hat{p}(c | \Hcal(t))$. However, both distributions concentrate their entire probability mass in one topic.
As a consequence, if there exists a time $t \in \{1, \ldots, T\}$ in which user $u$ shares a story from a topic that
is not the optimal one to choose, the log-likelihood $p^{*}(c | \Hcal(t))$ (or $\log \hat{p}(c | \Hcal(t))$) becomes 
unbounded.
Fortunately, we can overcome this undesirable behavior by approximating the distribution $p^{*}(c | \Hcal(t))$ (or $\hat{p}(c | \Hcal(t))$) 
using a softmax distribution
%
\begin{align}
   p_{\lambda}(c | \Hcal(t)) = \frac{\exp(\lambda(\ab^\dagger  \qb_c(t) + a_u {x}_{c} ))}{\sum_{c' \in \Ccal}\exp(\lambda( \ab^\dagger  \qb_{c'}(t) + a_u {x}_{c'} ))}\label{eq:poptmax}
\end{align}
where $\lambda$ is a given parameter and $\qb_c(t) = \qb_c$ or $\qb_c(t) = \hat{\qb}_c(t)$.
Second, if the number of feedback events is large, the above maximum likelihood estimation problem is not scalable.
To ameliorate this second challenge, in the following, we present a highly efficient heuristic based on linear loss minimization.

Our starting point is the following intuition: at each time $t \in \{1, \ldots, T\}$, if user $u$ is sampling the topic of
the story she shares from a distribution that is close to  ${p}(c | \Hcal(t))$, then the difference between the
optimal topic and the observed topic, \ie,
\begin{equation}\label{eq:approx-obj}
   \max_{c' \in \Ccal} \big(\ab^\dagger  {\qb}_{c'} (t)+a_u x_{j u}\big) - (\ab^\dagger  {\qb}_{c_t} (t)+a_u x_{c_t})
\end{equation}
should be small. Therefore, our heuristic finds the weights $\ab$ and $a_u$ and the parameters $\xb$ that minimizes
this difference over time. 
In practice, the solution to the above problem depends on whether we assume that the user utilized point estimates or posterior samples for
the followers'{} preferences. Therefore, we proceed in turn.

\subsubsection{Point estimates}
If we assume that the user utilized point estimates $\hat{q}_{cv}(t)$ for her followers'{} preferences, then we minimize:
\begin{align}\label{eq:approx-obj-point}
 \sum_{t\in[T]} & \max_{c' \in \Ccal} \big(\ab^\dagger  \hat{\qb}_{c'} (t)+a_u x_{j u}\big) - (\ab^\dagger  \hat{\qb}_{c_t} (t)+a_u x_{c_t}) 
\end{align}
However, the optimization problem is not convex due to the terms $a_u x_{c}$ and, in its current form, it is difficult to solve efficiently. Fortunately, an invertible
nonlinear transformation of the variables transforms it into a convex problem. Let $z_c = a_u x_{c}$ and $\zb=(z_c)_{c\in\Ccal}$. Then, we can rewrite 
the optimization problem as:
%
%
\begin{align*}
\underset{\ab, a_u, \zb}{\text{minimize}} & \sum_{t\in[T]} \left[ \max_{c' \in \Ccal} \left(\ab^\dagger \hat{\qb}_{c'} (t)+z_{c'}\right)- \left(\ab^\dagger \hat{\qb} _{c_t} (t)+z_{c_t} \right) \right],\\
\text{subject to} & \quad a_u \geq 0, a_v \geq 0 \quad \forall v \in \Ncal(u)  \\
& \sum_{v \in \Ncal(u)} a_v + a_u = 1\\
& \quad 0 \leq z_c \leq a_u, \forall c \in \Ccal.
\end{align*}
which is a convex problem jointly in $\ab$, $a_u$, and $\zb$ using composition rules and the fact that the log-exp function is convex.
Once we solve this convex problem, we can recover $x_{c}$ as $x_{c} = z_{c} / a_u$.

\subsubsection{Posterior samples}
If we assume that the user utilized posterior samples $\hat{q}_{cv}(t)$ for her followers'{} preferences, then, we need to take the average of the
objective function with respect to the posterior distributions of the followers'{} preferences for all topics, since we do not know which sample the
user actually took, \ie, the objective function becomes
\begin{equation*}
\sum_{t \in [T]}   \EE_{ \hat{\qb}(t), t \in [T] } \left[ \max_{c' \in \Ccal} \big(\ab^\dagger  \hat{\qb}_{c'} (t)+a_u x_{j u}\big) - (\ab^\dagger  \hat{\qb}_{c_t} (t)+a_u x_{c_t}) \right],
\end{equation*}
where $\hat{q}_{cv}(t) \sim p(q_{cv}(t) | \Hcal_v(t))$ for all $c \in \Ccal$ and $v \in \Ncal(u)$.
%
%
%
In practice, we just replace the above objective function by the empirical average with respect to
the followers'{} preferences, \ie,
\begin{equation*}
\sum_{t\in[T]} \sum_{s=1}^S  \left[ \max_{c' \in \Ccal} \left(\ab^\dagger \hat{\qb}^{(s)}_{c'} (t)+z_{c'}\right)- \left(\ab^\dagger \hat{\qb}^{(s)} _{c_t} (t)+z_{c_t} \right) \right],
\end{equation*}
where $\hat{q}^{(s)}_{cv}(t) \sim p(q_{cv}(t) | \Hcal_v(t))$ for all $c \in \Ccal$ and $v \in \Ncal(u)$ and $S$ is the number of samples.

\vspace{-1mm}
\subsection{Statistical Hypothesis Testing}
\vspace{-1mm}
Given an estimation of the model parameters, we determine their statistical significance using statistical hypothesis testing. More specifically, 
we proceed as follows.
%

Under the null hypothesis $\mathbb{H}_0$, the user does not utilize the feedback she receives from her followers to decide what to post
next, \ie, $a_v = 0$ for all $v \in \Ncal(u)$, and, under the alternative hypothesis, the user does utilize it, \ie, $a_v \geq 0$ for all $v \in \Ncal(u)$.
%
%
Then, for each user $u$, we use the log-likelihood ratio ($LLR$) as a test statistic to measure the statistical power of the feedback data, \ie,
\begin{align}
  LLR(u)=\sum_{t \in [T_u]} \log p_{\lambda, \hat{\ab} \geq 0}(c_t|\Hcal(t))-\sum_{t \in [T_u]} \log p_{\lambda, \hat{\ab} = 0}(c_t|\Hcal(t)),\nn
\end{align}
where $T_u$ is the total number of posts shared by user $u$.
%
%
%
%
Finally, we assess the statistical significance of the $LLR$ values using the theoretical distribution of the $LLR$ under the null hypothesis, \ie, 
$\chi^2 _1(|\Ncal(v)|-1)$, given by Wilks' theorem~\cite{wilks1938large},
where a high value of $LLR$ allows us to reject the null hypothesis with high probability (low $p$-value). 
\section{Experiments on synthetic data}
\label{sec:experiments}
%
In this section, we experiment with synthetic data to: (i) illustrate the theoretical properties of Algorithm~\ref{alg:sampling}, discussed
in Section~\ref{sec:properties}; and (ii) show that our utility estimation framework, discussed in Section~\ref{sec:estimation}, can be
used to accurately estimate a user'{}s utility from historical data.

\xhdr{Experimental setup}
We assume that the user $u$ under study has $|\Ncal(u)| = 10$ followers. Then, we sample her weights $a_v \sim \text{Dir}(\gamma)$ and
$a_u \sim Dir(\gamma)$ with $\gamma = 0.8$,
the preferences of her followers $q_{cv}\sim\text{Beta}(0.4,0.6)$,
and her own preferences $x_{c} \sim \text{Beta}(0.4,0.6)$.
The number of topics $|\Ccal| = K$ and rates $\mu_{cv}$ vary for different experiments and thus are specified therein.

\xhdr{Regret analysis}
First, we consider the case when $u$ estimates her followers'{} preferences only on the basis of the feedback her stories receive from
her followers.
To that aim, we fix $\mu_{cv}=0$ for all $c\in\Ccal$ and $v\in\Ncal(u)$, and simulate data from our feedback model of posting behavior
for different number of topics $|\Ccal|=K$. Then, we investigate the variation of the regret over time.
Figure~\ref{fig:reg1} summarizes the results which show that:
(i) point estimates suffer linear regret whereas, posterior samples achieve logarithmic regret, thereby supporting our theoretical findings in
Theorem~\ref{thm:map-wofeedback} and Corollary~\ref{cor1}; and,
(ii) as the number of topics increases, the number of unknown preferences increases, and as a result, the regret increases.

Next, we consider the case when $u$ additionally utilizes the feedback her followers give to others. To that aim, we set $K=10$, sample
$\mu_{cv}\sim \text{Unif}[0,2\bar{\mu}]$ and simulate our model on user $u$ for different value of $\bar{\mu}$.
Figure~\ref{fig:reg2} summarizes the results which show that:
(i) the additional information (\ie, the feedback to others) significantly reduces the regret---even point estimates achieve a regret of $O(\sqrt{T})$,
and moreover, posterior samples achieve a constant regret (\ie, $O(1)$), thereby supporting Theorem~\ref{thm:map-feedback} and Theorem~\ref{thm:post};
and,
(ii) as $\bar{\mu}$ increases, $u$ estimates their followers'{} preferences on the basis of a larger amount of feedback and, as a result, the regret
decreases.
\begin{figure}[t]
\centering
\subfloat[$\hat{q}_{cv}(t)$ are point estimates]{\includegraphics[width=0.26\textwidth]{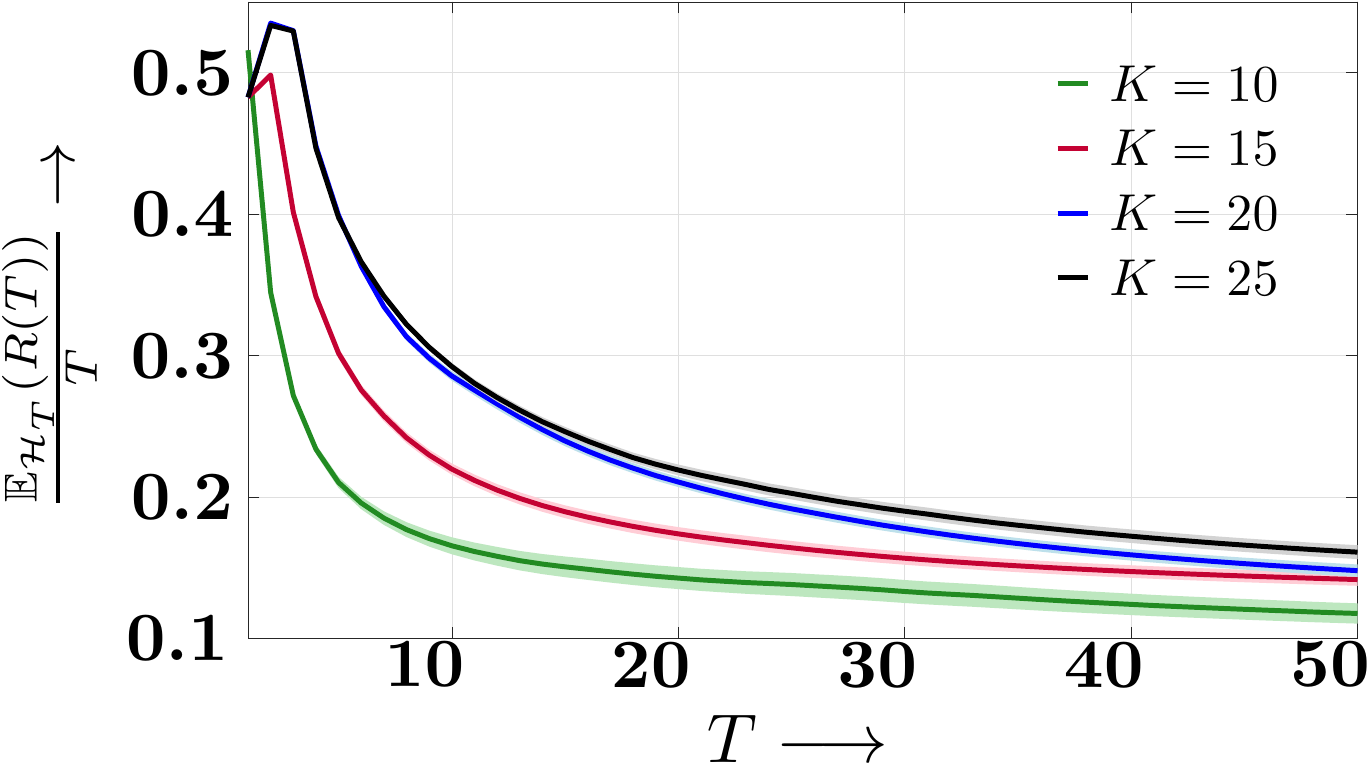}\label{fig:regretMAPWithOutNbrsRatio_new}}\hspace*{1cm}
\subfloat[$\hat{q}_{cv}(t)$ are posterior samples]{\includegraphics[width=0.26\textwidth]{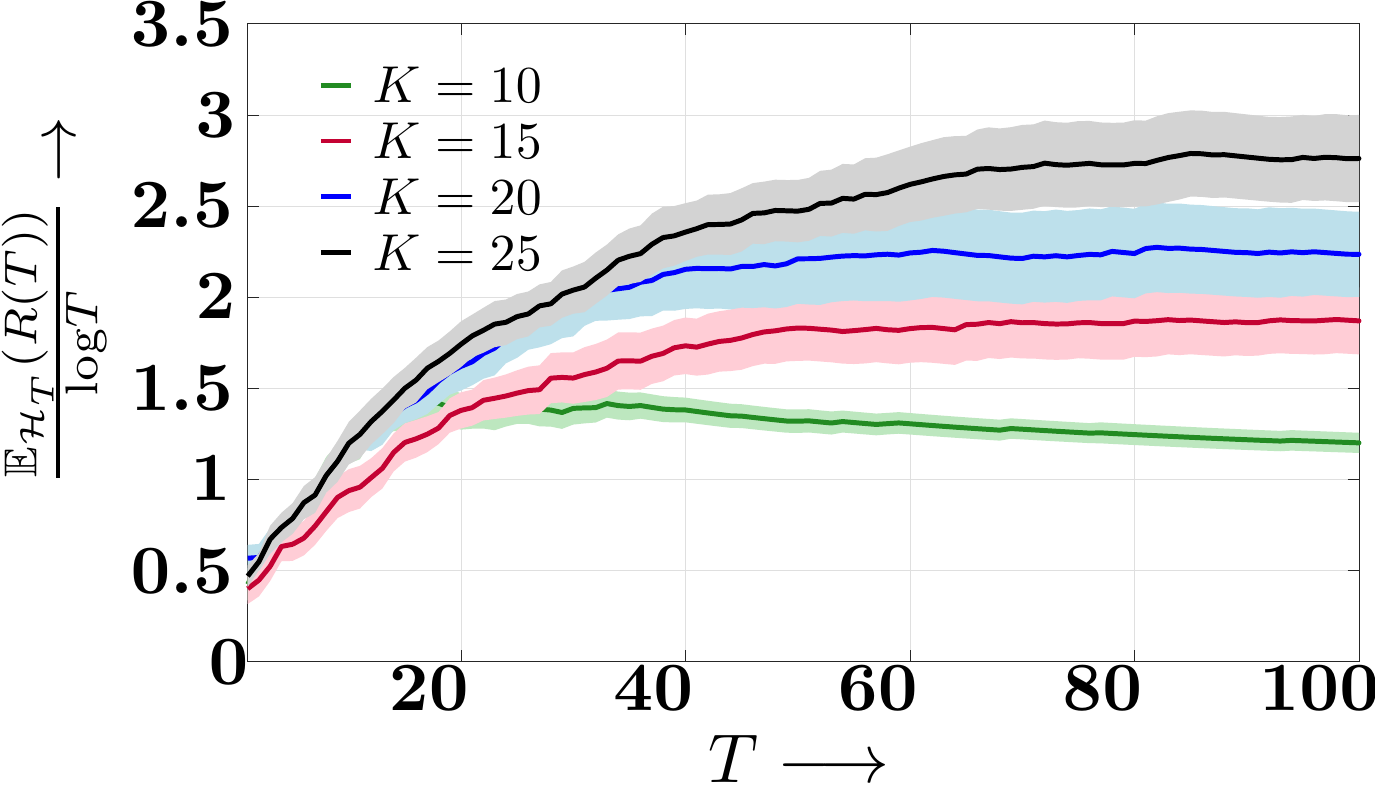}\label{fig:regretPosteriorWithOutNbrs_new}}
\vspace*{-2mm}
\caption{Regret analysis when user $u$ only leverages the feedback she receives, \ie, $\ \mu_{cv}=0$. Panel (a) shows that,
if she uses point estimates for her followers'{} preferences, Algorithm~\ref{alg:sampling} suffers linear regret. Panel (b)
shows that, if she uses posterior sampling, it achieves logarithmic regret.
In both panels, as the number of topics $K$ increases, the regret increases.}
 \label{fig:reg1}
\vspace*{-4mm}
\end{figure}
 \begin{figure}[t]
\centering
\subfloat[$\hat{q}_{cv}(t)$ are point estimates]{\includegraphics[width=0.26\textwidth]{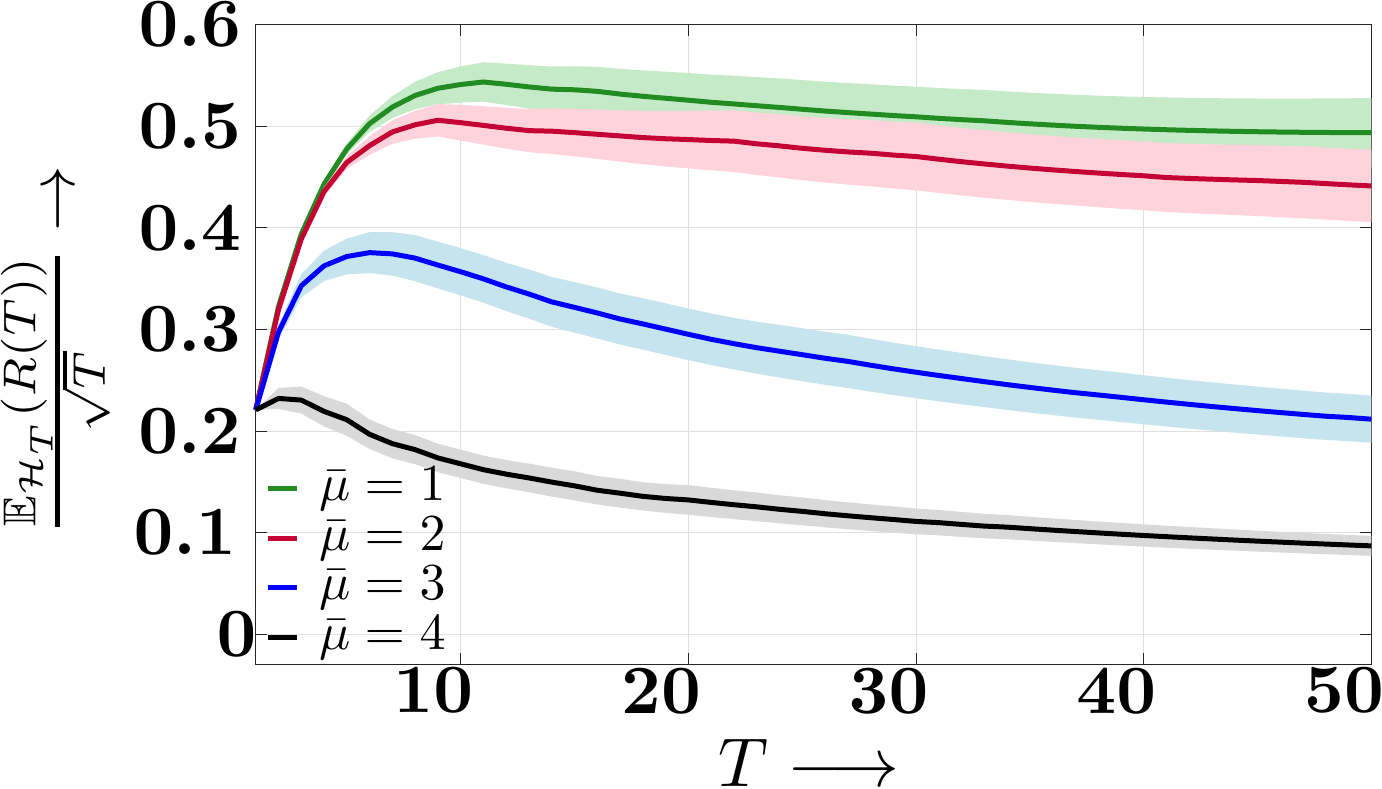}\label{fig:regret1}}\hspace*{1cm}
\subfloat[$\hat{q}_{cv}(t)$ are posterior samples]{\includegraphics[width=0.26\textwidth]{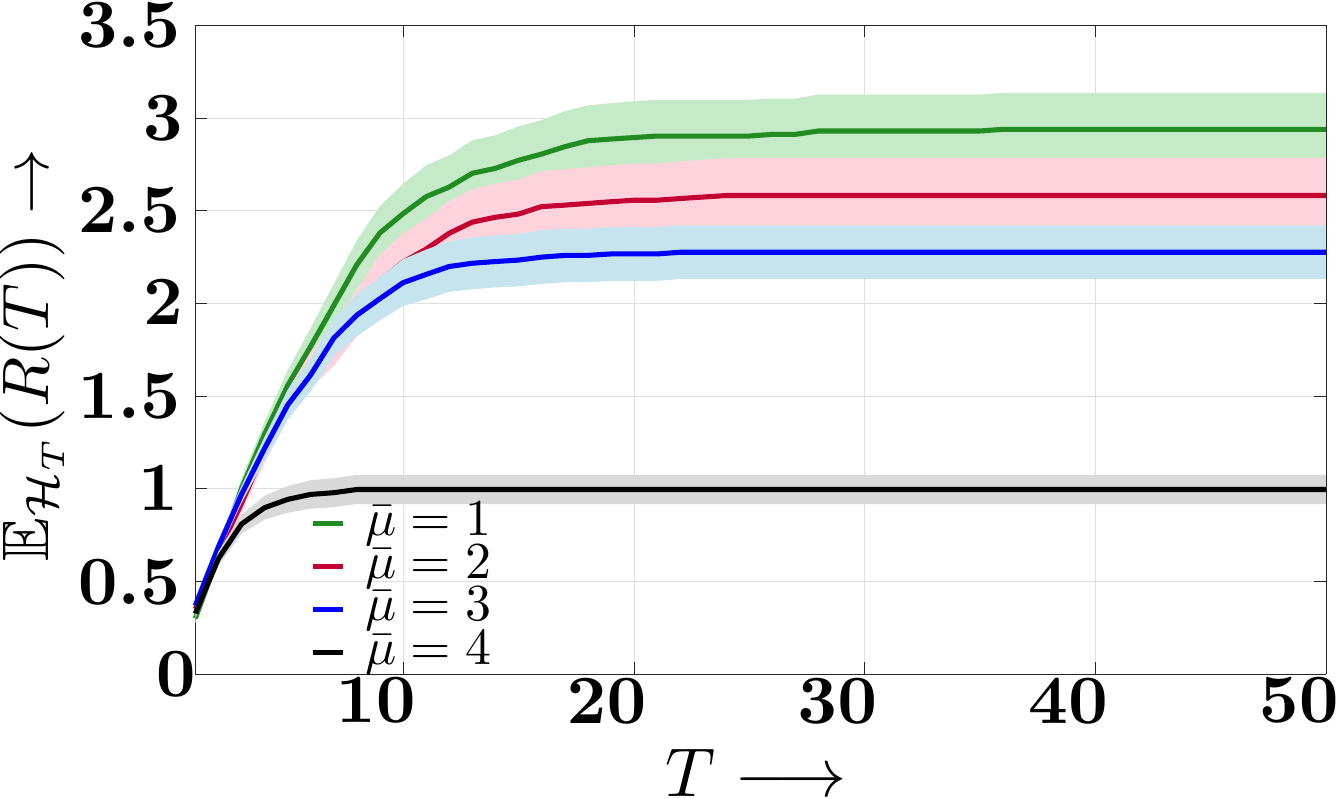}\label{fig:regretPosteriorWithOutNbrs_new}}
\vspace*{-2mm}
\caption{Regret analysis when user $u$ levegares the feedback she receives as well as the feedback her followers give to others, \ie, $\mu_{cv}>0$.
Panel (a) shows that, if she point estimates for her followers'{} preferences, Algorithm~\ref{alg:sampling} achieves sublinear regret $O(\sqrt{T})$.
Panel (b) shows that, if she uses posterior sampling, it achieves constant regret $O(1)$.
In both panels, as the average rate of feedback to others $\mu_{cv}$ increases, the regret decreases.}
 \label{fig:reg2}
\vspace*{-4mm}
\end{figure}

\xhdr{Model estimation}
To investigate the performance of our utility estimation framework, we first generate $\Hcal(T)$ by simulating data from our model with
$K=10$ and sample $\mu_{cv}\sim\text{Unif}[0,2\bar{\mu}]$.
Then, we train our model using the generated $\Hcal(T)$, for different $T$ and $\bar{\mu}$ values using our two estimation methods from
Section~\ref{sec:estimation}.
Finally, we evaluate the accuracy of model estimation procedures in terms of root mean square error (RMSE) between the estimated and true parameters,
\ie, $$RMSE = \sqrt{\EE(||\ab-\hat{\ab}||^2) + \EE(||a_u-\hat{a}_u||^2) + \EE(||\xb_u-\widehat{\xb}_u||^2)}.$$ 
Figure~\ref{fig:est} summarizes the results the method based on linear loss minimization, which show that,
(i) as $T$ increases and we feed more training samples into the estimation procedure, the accuracy increases;
(ii) similarly, as $\bar{\mu}$ increases and we feed more feedback into the estimation procedure, the accuracy increases; and,
(iii) the estimation accuracy for posterior samples is significantly better than for point estimates;

\vspace{-3mm}
\section{Experiments on real data}
\label{sec:experiments}
In this section, we apply our utility estimation algorithms to several real datasets gathered from Twitter and Reddit and then, using the utility estimation 
framework described in Section~\ref{sec:estimation}, show that $53\%-82\%$ of the users in the Twitter datasets and $28\%-43\%$ of the users in the 
Reddit datasets use the feedback they receive from their followers to decide what to post next.

\begin{figure}[t]
\centering
\subfloat[$\hat{q}_{cv}$ are point estimates]{\includegraphics[width=0.26\textwidth]{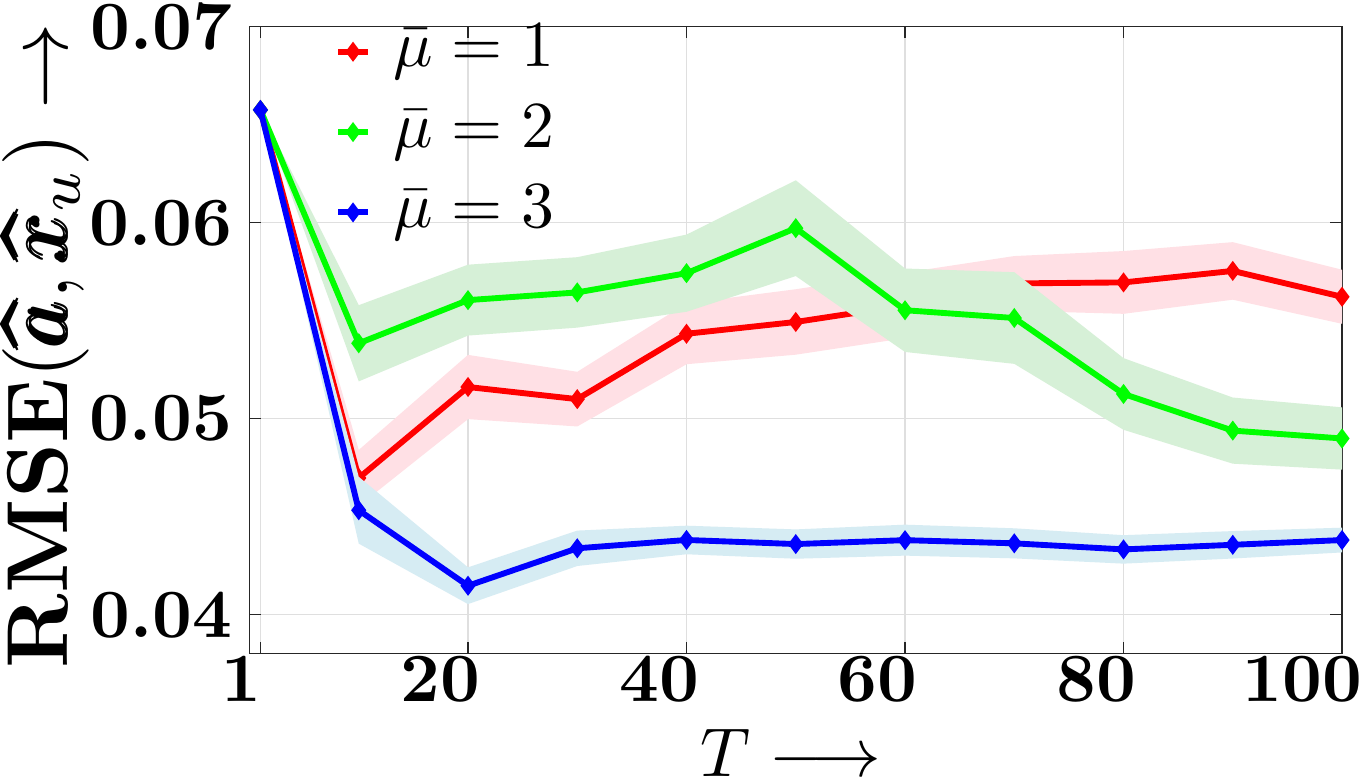}\label{fig:regretPosteriorWithOutNbrs_new}}\hspace*{1cm}
\subfloat[$\hat{q}_{cv}$ are posterior samples]{\includegraphics[width=0.26\textwidth]{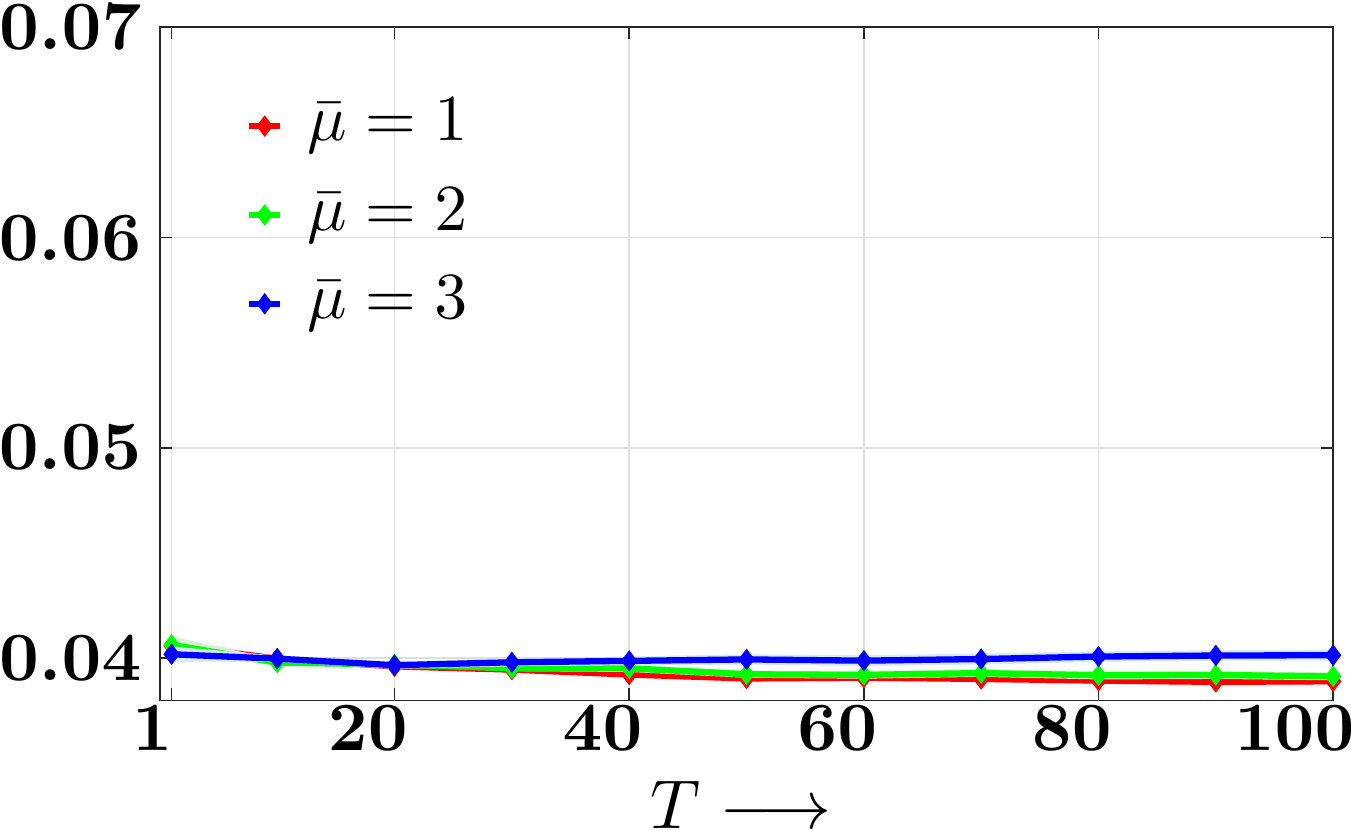}\label{fig:regretPosteriorWithOutNbrs_new}}
\vspace*{-2mm}
 \caption{Performance of our utility estimation method based on linear loss minimization in terms of $RMSE = \sqrt{\EE(||\ab-\hat{\ab}||^2) + \EE(||a_u-\hat{a}_u||^2) + \EE(||\xb_u-\widehat{\xb}_u||^2)}$ 
 as $T$ increase. The performance is significantly better whenever the user leverages posterior samples for her followers'{} preferences.} 
 \label{fig:est}
\vspace*{-4mm}
\end{figure}

 \begin{figure*}[!t]
\centering
\subfloat[Brazil]{\includegraphics[width=0.23\textwidth]{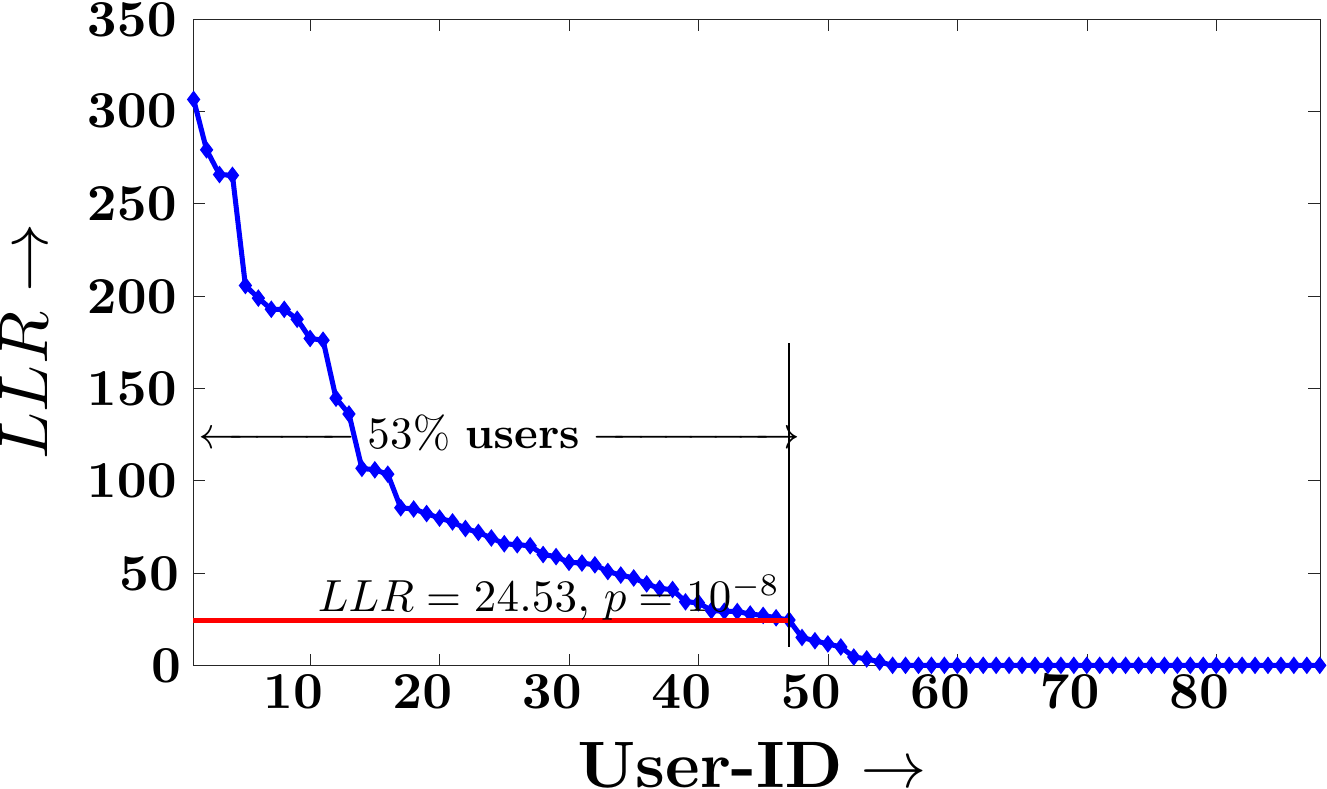}\label{fig:brazil_hyp}}\hspace*{1cm}
\subfloat[TOT]{\includegraphics[width=0.23\textwidth]{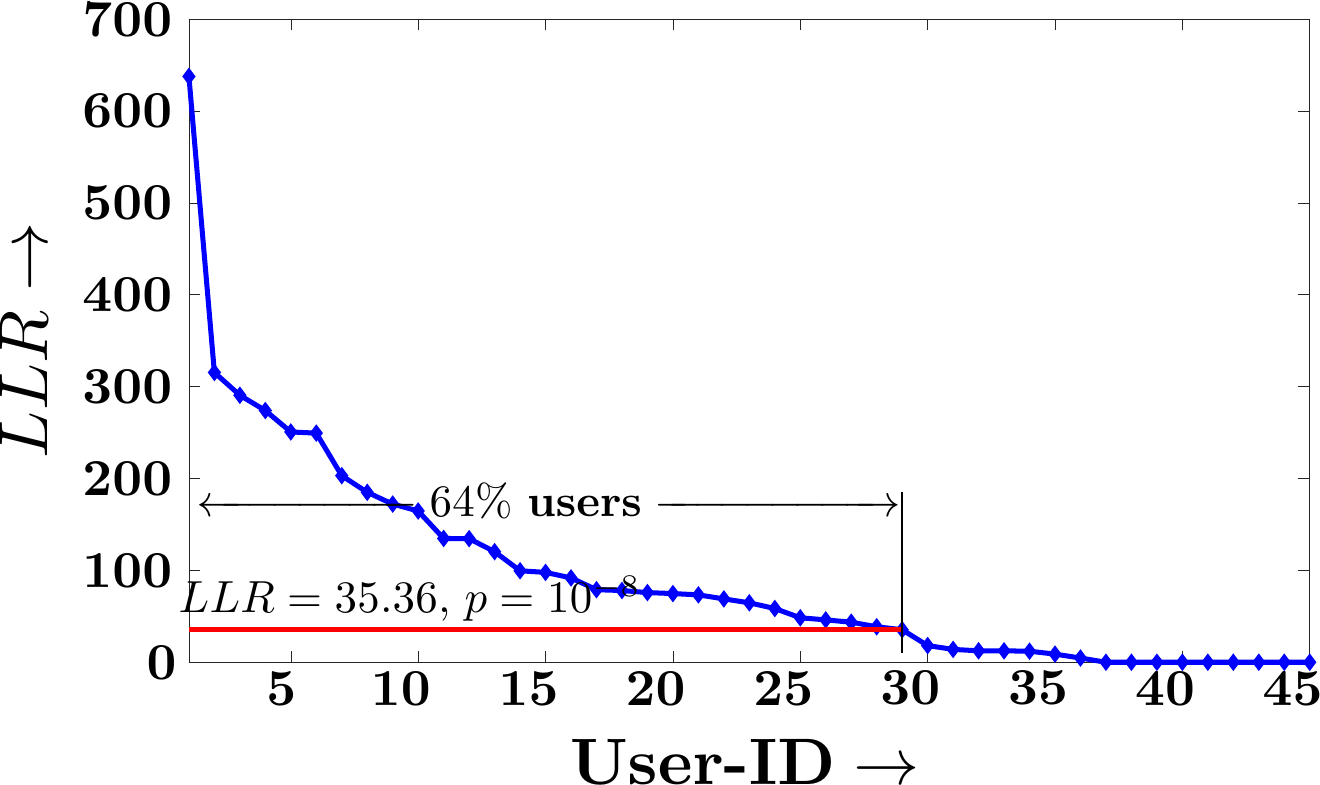}\label{fig:TOT_hyp}}\hspace*{1cm}
\subfloat[Iran]{\includegraphics[width=0.23\textwidth]{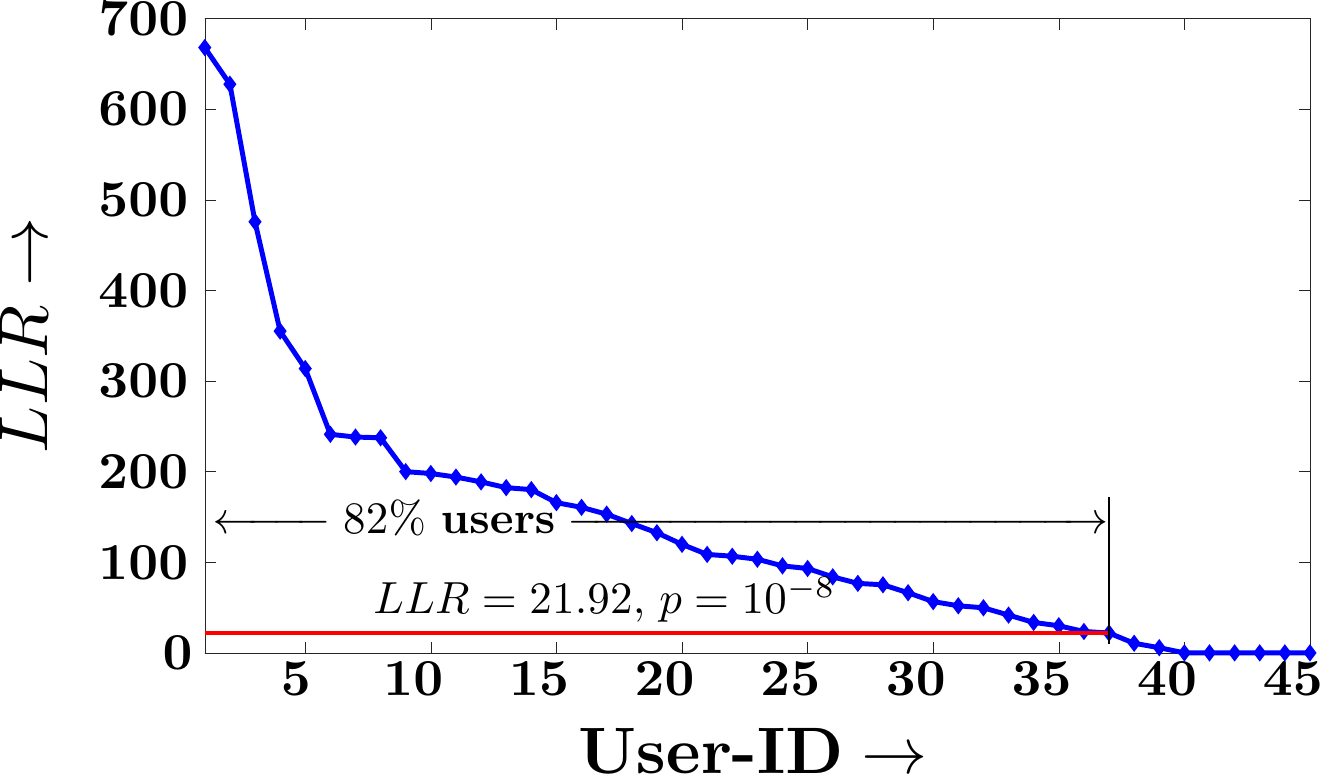}\label{fig:iran_hyp}}\\
\vspace*{-0.3cm}
 \hspace*{1cm}
\subfloat[Leisure]{\includegraphics[width=0.23\textwidth]{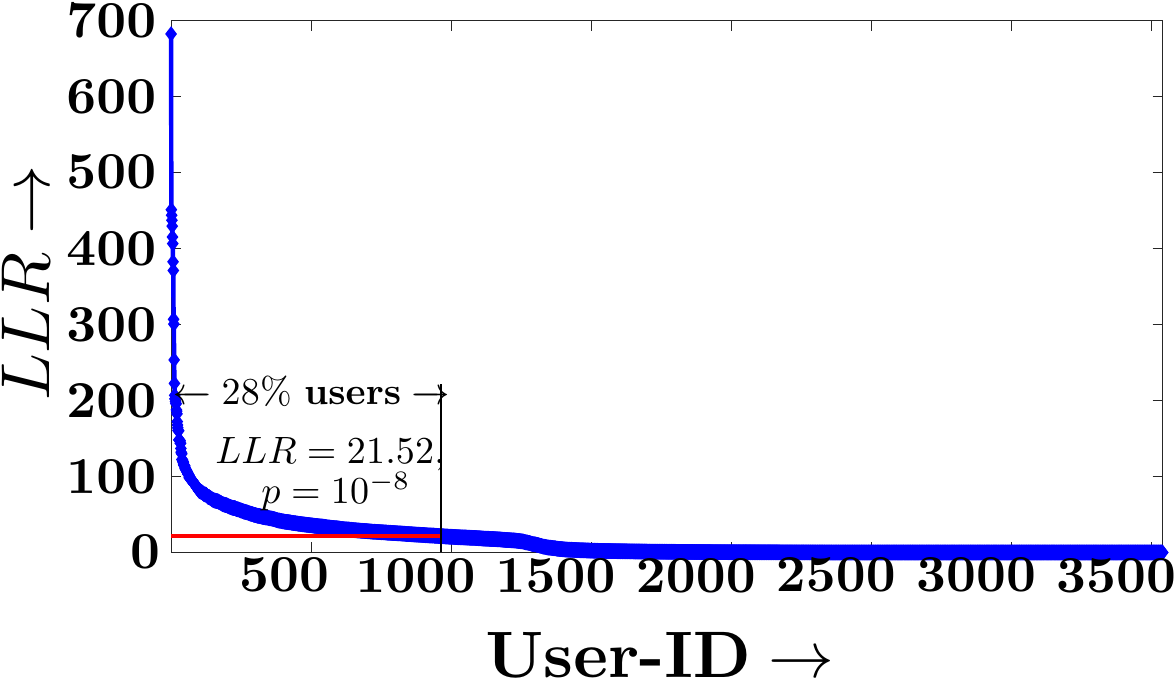}\label{fig:leisure_hyp}}\hspace*{1cm}
 \subfloat[Sports]{\includegraphics[width=0.23\textwidth]{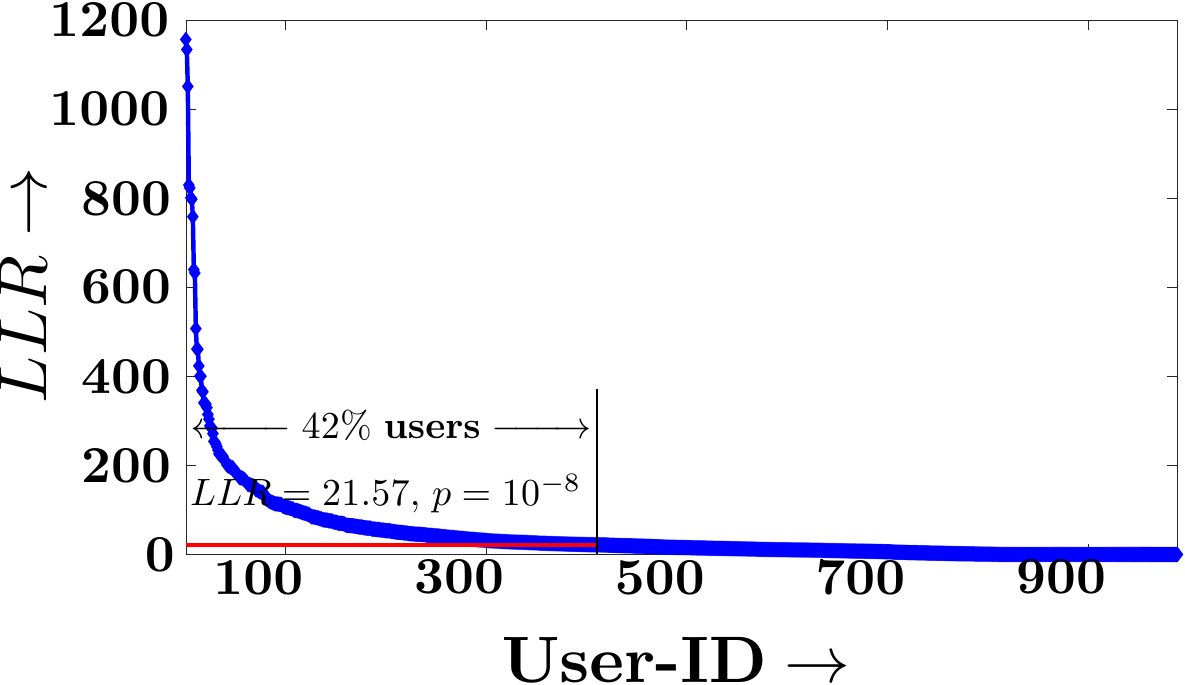}\label{fig:sports_hyp}}\hspace*{1cm}
\subfloat[Learning]{\includegraphics[width=0.23\textwidth]{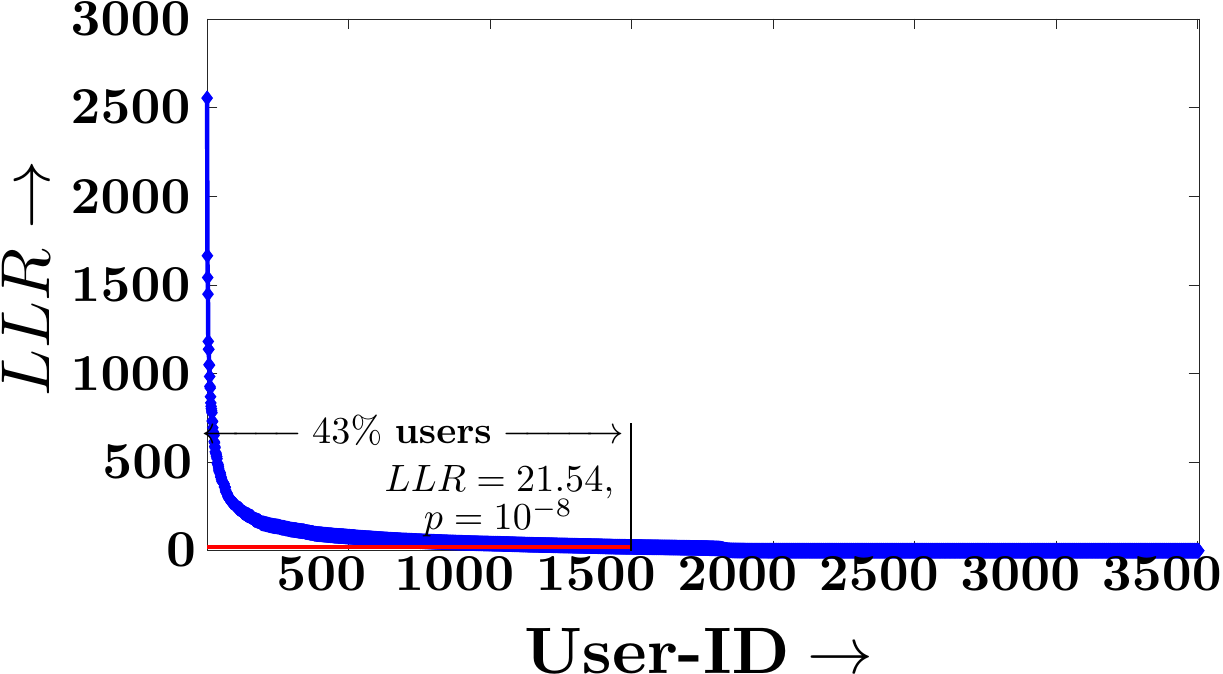}\label{fig:learn_hyp}}\hspace*{1cm}
  \vspace{-3.5mm}
\caption{Log-likelihood ratio ($LLR$) values for all users in the three Twitter (first row) and Reddit (second) datasets. The horizontal red line shows the minimum LLR value to achieve statistical
significance at $p = 10^{-8}$. The results indicate that we can reject the hypothesis that
users do not utilize the feedback they receive from their followers for $53$\%, $64$\% and $82$\%  ($28$\%, $42$\% and $43$\%) of the users, respectively for Twitter and Reddit datasets. }
 \label{fig:emp}
\vspace*{-3.5mm}
\end{figure*}

\xhdr{Data description and experimental setup}
We collect Twitter and Reddit data for evaluating our utility estimation methods.

\emph{--- Twitter: } We used data gathered from Twitter as reported in previous work~\cite{cha2010measuring}, which comprises the profiles of $52$ million users, $1.9$ billion directed
follow links among them, and $1.7$ billion public tweets posted by these users, where the underlying link information is based on a snapshot taken at the time of
data collection, in September 2009.
Here, we focused on the tweets published during a two month period, from July 1, 2009 to September 26, 2009, which allows us to consider the set of followers of a user
to be approximately static.

In our experiments, a follower $v$ provides feedback to a tweet published by a user $u \in \Vcal$ if she retweets it\footnote{\scriptsize Since back in 2009, Twitter did not
have a retweet button, we consider a Jaccard similarity $>80\%$ between tokens contained in two tweets to decide if the latter is a retweet of former.}
and each topic $c \in \Ccal$ corresponds to the most common\footnote{\scriptsize In case a tweet contains more than one hashtag, we consider the hashtag that is more
common across our dataset.} hashtag a tweet contains.
Moreover, using manual inspection, we tracked down the hashtags used in three different themes to create three datasets:
\begin{itemize}[leftmargin=0.7cm]
 \item[---] \textbf{Brazil:} Brazil elections which took place in latter 2010, where $|\Ccal|=4$, $|\Vcal|=88$ and $\EE[T] = 166$.
 \item[---] \textbf{TOT:}  Top US conservatives and liberals on Twitter, where $|\Ccal|=11$, $|\Vcal|=45$ and $\EE[T] = 189$.
  \item[---] \textbf{Iran:} Iran presidential elections in 2009, where $|\Ccal|=5$, $|\Vcal|=45$ and $\EE[T] = 395$.
\end{itemize}
In each of the above datasets, we filtered out hashtags $c$ that were used less than $500$ times and users $u$ who posted less than four tweets with at least two of these
hashtags or whose tweets were not retweeted more than four times by at least 2 followers.
Moreover, for each of user $u \in \Vcal$, we tracked down the five followers who retweeted her tweets more frequently and, for each these followers, we reconstructed the
feedback they provided to her and others by collecting all their retweets as well as the tweets posted by all the users they follow.

\noindent \emph{--- Reddit: } 
We used publicly available data gathered from Reddit\footnote{\scriptsize https://archive.org/details/2015\_reddit\_comments\_corpus}, which comprises the profiles of $5$ million users
and $226$ million comments posted by these users in the month of May, 2015.
%
In our experiments, a user $v$ provides feedback to a message published by a user $u$ if she replied to it and each topic $c\in \Ccal$ corresponds to the 
subreddit in which a comment was written.
Here, we tracked down the subreddits in three different themes to create three datasets:
%
%
%
\begin{itemize}[leftmargin=0.7cm]
 \item[---] \textbf{Leisure:} \textit{r/funny}, \textit{r/pics} and \textit{r/WTF}, where $|\Ccal|=3$, $|\Vcal|=3,540$ and $\EE[T] = 126$.
 \item[---] \textbf{Sports:} \textit{r/CFB}, \textit{r/nba} and \textit{r/nfl}, where $|\Ccal|=3$, $|\Vcal|=989$ and $\EE[T] = 309$.
 \item[---] \textbf{Learning:} \textit{r/AdviceAnimals}, \textit{r/TodayILearned}, \textit{r/AskReddit} and \\ \textit{r/worldnews}, where $|\Ccal|=4$, $|\Vcal|=3,505$ and $\EE[T] = 218$.
\end{itemize}
In each of the above datasets, we only considered users $\Vcal$ who have made at least 20 top-level comments in at least 2 distinct categories $c \in \Ccal$.
Moreover, for these users $u \in \Vcal$, we tracked down the five users who have replied to their comments more frequently and consider them to be the neighbors of $u$.
Finally, for each of these followers, we reconstructed the feedback they provided to $u$ and others by collecting all their replies as well as the comments posted by all their neighbors.

\xhdr{Results}
We determined whether each user in each of the six datasets utilizes the feedback she receives from each of her follo\-wers to decide what to post next using the 
utility estimation framework described in Section~\ref{sec:estimation}.
Figure~\ref{fig:emp} summarizes the results, which show that, at p-value $p = 10^{-8}$, we can reject the hypothesis that users do not utilize the feedback they receive
from their followers for $53$\%, $64$\%, $82$\% of the users in the Twitter datasets and for $28$\%, $42$\%, $43$\% of the users in the Reddit datasets. 

 \vspace{-3mm}
\section{Conclusions}
\label{sec:conclusions}
 \vspace{-1mm}
%
%
In this paper, we introduced a feedback model of posting behavior in social media. The model allowed us: (i) to investigate under which
conditions can a user succeed at maximizing the (positive) feedback she receives; and, (ii) to determine whether a user utilizes the feedback
she receives from each of her followers to decide what to post next using observational data. 
%
%
%
%
Moreover, we performed experiments on synthetic and real data gathered from Twitter and Reddit to illustrate our theoretical findings, show
that our estimation methods are able to accurately recover users'{} underlying utility functions, and provide empirical evidence
that
$53\%-82\%$ of the users in the Twitter datasets and $28\%-43\%$ of the users in the Reddit datasets use the feedback they receive from their followers to decide what to
post next. 

There are many interesting venues for future work. For example, we have assumed that the followers'{} preferences are not influenced
by the users'{} posting behavior. It would be interesting to analyze a scenario in which both users and the followers influence each other.
We have considered a simple linear utility function, a natural next step would be considering more complex utility functions with
higher predictive power.
Finally, it would be very interesting to apply our utility estimation methods to real data from other social media platforms, \eg,
Facebook.

%

%

{
\small
\bibliographystyle{abbrv}
\bibliography{refs}
}
\clearpage
\newpage
%
%
\section{Appendix}

\subsection{Proof of Theorem~\ref{thm:map-wofeedback}} \label{app:thm:map-wofeedback}
Consider a special case, where the number of topics is $K=2$; $u$ has only one follower $v$ with $q_{1v} > q_{2v} = \frac{1}{2}$;
the weights $a_v=a_u={1/2}$; the user preferences $x_{1u} = x_{2u} = q \ge 0$; and the prior parameters
$\alpha=\beta=3$.
Then, the regret $R(T)$ is
\begin{align}
\frac{1}{2}\sum_{t\in[T]}\big(&\max  \{q_{1v}+q,q_{2v}+q\} -(q_{c_t v}+q)\big) \nn
\end{align}
Since, $q_{1v} > q_{2v}$, we have $R(T)=T_2(q_{1v}-q_{2v})/2$, where $T_2$ is the number of stories about topic $2$.
Now, $\alpha=\beta$ makes the initial estimates $\hat{q} _{1v}(1)= \hat{q} _{2v} (1)=\frac{1}{2}$, and therefore, a topic is chosen randomly. Let us assume that
topic $2$ (the wrong category with minimum utility) is selected and a story from that topic is shared.
If $v$ likes this story, which can happen with probability $q_{2v}=1/2$, then $\hat{q}_{2v}(2)=(1+3-1)/(1+6-2)=3/5>\hat{q}_{1v}(2)=1/2$ and topic $2$ is again
selected at next time $t=2$.
At $t=3$, $\hat{q}_{2v}(t)$ again increases (decreases) with probability $1/2$ $(1/2)$.
Note that, $u$ keeps choosing topic $2$ as long as $\hat{q}_{2v}(t)> 1/2$ \ie\ $n_{2v}(t)>\bar{n}_{2v}(t)$, and the possibility of selecting topic $1$ only arises
when $n_{2v}(t)=\bar{n}_{2v}(t)$.
Such a situation can be mapped to an instance of \emph{a simple one dimensional random walk}, where the walker starts from origin at $t=1$, if a story from topic
$2$ is shared.
The walker moves to right (left) if user $v$ does (not) like the story. Now, the expected time of the first return to origin in a simple random walk is infinite. Therefore,
once $u$ starts posting the messages with category $2$, the expected time that $n_{1v}(t)=n_{2v}(t)$ for the first time $t=t_{\text{first}}\to \infty$~\cite{feller1968introduction}. Therefore,
$\EE(T_2)=\Theta(T)$. More formally, we have:
\begin{align}
\EE(&T_2)> \ \EE(T_2|c_1=2,l_v(1)=1)\PP(c_1=2,l_v(1)=1) \nn\\
     &   =\EE(T_2|c_1=2,l_v(1)=1)\PP(l_v(1)=1|c_1=2)\PP(c_1=2)\nn\\
     &   = \frac{1}{4}\EE(T_2|c_1=2,l_v(1)=1)
\end{align}
Now, in the random walk setting, $\EE(T_2|c_1=2,l_v(1)=1)$ is the amount of time the walker stays on the positive or right side of the line, and therefore,
greater than the time of first return to origin, which is infinite.
Therefore, the walker stays on the right side for the entire $T$, given the first step is taken towards the right side is $T$.
Hence, we have $\EE(T_2|c_1=2,l_v(1)=1)= T$. hence $\EE(T_2)>T/4$.
%
\vspace*{-0.1cm}
\subsection{Proofs of Theorems~\ref{thm:map-feedback} and~\ref{thm:post}} \label{app:regret-full-feedback}
\vspace*{-0.1cm}
\xhdr{Preliminaries} To prove these theorems, we first present few notations which will be used throughout both the proofs.
For the sake of brevity, we define (i) $q_{cu}\coloneqq x_{c}$; 
%
(ii) $\hat{q}_{cu}(t)=x_c$ for all $t$;
(iii) $\Ncal^*(u)=\Ncal(u)\cup\{u\}$; (iv) $k_c(t)$ as the number of posts made by broadcaster $u$ up to and including time $t$, that have category $c$; 
(iv) $M_{cv}(t)\coloneqq N_{cv}(t)-k_c(t-1)$;
(v) $M_{cv}(t)$ as the number of messages with category $c$, which $u$ may observe as they are appearing the wall of $v$ until and excluding time $t$;
(vi) $F^{Beta}_{\alpha,\beta}(.)$ as the c.d.f of Beta distribution with parameters $\alpha$ and $\beta$; 
(vii) $F^B _{n,p}(.)$ as the c.d.f in Binomial distribution with parameters $n,p$;
(viii) $f_{n,p}(.)$ as the probability mass function in Binomial distribution function; 
(ix) $\hat{\eta} _{cv}(t)\coloneqq n_{cv}(t)/N_{cv}(t)$;
(x) $\hat{\eta}_i (t)\coloneqq\sum_{v\in\Ncal(u)}a_v \hat{\eta} _{cv}(t)+a_u x_c$;
(xi) $\eta_c \coloneqq \sum_{v\in\Ncal(u)}a_v q _{cv}+a_u x_c$;
(xii) $\theta_c\coloneqq \sum_{v\in\Ncal(u)}a_v \qh  (t)+a_u x_c$;  
(xiii) $p_{c,t}\coloneqq\PP(\theta_1(t)>\rho_c+a_u x_c|\Hcal_{t})$; and
(xiv) $d_{KL}(a,b)=a\log (a/b)+(1-a)\log ((1-a)/(1-b))$, where $a,b\in [0,1]$.\\
Furthermore, we define 
\vspace*{-0.1cm}
\begin{align}
& \dT\coloneqq \sum_{c\in\Ccal} \Big(\hspace*{-0.2cm}\sum_{v\in\Ncal(u)} a_v q_{cv}+a_u x_c\Big) \big(p^{*}(c_t|\Hcal_t) -\hat{p} (c_t|\Hcal_t)\big)\nn\\
&\quad\quad\quad\overset{a}{=} \sum_{c\in\Ccal} \sum_{v\in\Ncal^*(u)} a_v q_{cv} \big(p^{*}(c_t|\Hcal_t) -\hat{p} (c_t|\Hcal_t)\big)\label{eq:delHdef}
\end{align}
Equality $a$ is due to definitions (ii) $q_{cu}=x_c$ and  (iii) $\Ncal^*(u)=\Ncal(u)\cup\{u\}$.
\vspace*{-0.1cm}
\begin{align}
 &\text{So, }  \EE_{\Hcal_T} (\dT)= \EE_{\Hcal_t} \EE_{\Hcal_T\backslash \Hcal_t} (\dT)
 =\EE_{\Hcal_t} (\dT)\nn\\
&\text{Hence, }
 \EE_{\Hcal_T}(\regT)= \sum_{t\in[T]}\EE_{\Hcal_t} (\dT)
\end{align}
\xhdr{Proof of Theorem~\ref{thm:map-feedback}}
We have,
\vspace*{-0.2cm}
\begin{align}
\dT=&\max_{c\in\Ccal}\sum_{v\in\Ncal^*(u)} a_v q_{cv} -\sum_{v\in\Ncal^*(u)} a_v q_{c_t v}\nn\\
=&\max_{c\in\Ccal}\sum_{v\in\Ncal^*(u)} a_v q_{cv}-\max_{c'\in\Ccal}\sum_{v\in\Ncal^*(u)} a_v \hat{q}_{c'v}(t)\nn \\
+&\max_{c'\in\Ccal}\sum_{v\in\Ncal^*(u)} a_v \hat{q}_{c'v}(t)-\sum_{v\in\Ncal^*(u)} a_v q_{c_t v}\nn \\
\overset{a}{\le} & \max_{c\in\Ccal} \Big| \hspace*{-0.1cm}\sum_{v\in\Ncal(u)}\hspace*{-0.2cm}\big(a_v q_{cv}-a_v\hat{q}_{cv}(t)\big)\Big|
+\hspace*{-0.2cm}\sum_{v\in\Ncal(u)} \hspace*{-0.2cm}\big(a_v \hat{q} _{c_t v} (t)-a_v q_{c_t v}\big)\nn\\
 {\le}\ &  \ 2\sum_{c\in\Ccal}\sum_{v\in\Ncal(u)}|q_{cv}-\qh(t)|\label{eq:edt1}
\end{align}
The inequality $a$ is obtained using (i) the triangle inequality of max norm; (ii) the fact that $c_t=\argmax_{c'\in\Ccal} \sum_{v} a_v \hat{q}_{c'v}(t)$; and
(iii) $\hat{q}_{cu} (t)=q _{cu}$, where (iii) reduces the summation over $\Ncal^*(u)$ to $\Ncal(u)$. Since, $\EE(X)\le \sqrt{\EE(|X|^2)}$, we have
\begin{align}
\hspace*{-0.4cm}\EE_{\Hcal_t}( \dT) &\le \ 2\sum_{c\in\Ccal}\sum_{v\in\Ncal(u)}\sqrt{ \EE_{\Hcal_t}(q_{cv}-\qh(t))^2}\nn\\
&=2 \hspace*{-0.4cm}\sum_{c\in\Ccal, v\in\Ncal(u)}\hspace*{-0.2cm}\sqrt{ \EE_{N_{cv}(t)}\EE ((\qcv-\qh(t))^2|N_{cv}(t))}\label{eq:medx}
\end{align}
$(\qcv-\qh(t))^2$ can be written as,
\begin{align}
&\qcv^2-2\qcv \frac{\alpha+\ncv-1}{\alpha+\beta+\nv-2}+\Big(\frac{\alpha+\ncv-1}{\alpha+\beta+\nv-2}\Big)^2\nn
\end{align}
On expanding, we have $ \EE((\qcv-\qh(t))^2|N_{cv}(t))$ to be same as  
\begin{align}
\frac{((\alpha-1)-\qcv(\at-2))^2+\nv\qcv(1-\qcv)}{(\alpha+\beta+\nv-2)^2}\nn\\
=\frac{\kappa_1}{(\at+\nv-2)^2}+\frac{\kappa_2}{\at+\nv-2}
\end{align}
where $\kappa_1=((\alpha-1)-\qcv(\at-2))^2-\qcv(1-\qcv)(\at-2)$ and $\kappa_2=\qcv(1-\qcv)$. Finally we observe that,
$(\at+\nv-2)^2 \ge (\at-2)(\at+\nv-2)$. Hence,
\begin{align}
 \EE((\qcv-\qh(t))^2|N_{cv}(t))&\le \frac{\kappa_1/(\at-2)+\kappa_2}{\at+\nv-2}
\end{align}
From Eq.~\ref{eq:medx}, we now have,
\begin{align}
\EE_{\Hcal_t}( \dT) &\le 2\sum_{c\in\Ccal}\sum_{v\in\Ncal(u)}\hspace*{-0.2cm}\sqrt{ \EE_{N_{cv }(t)} \frac{\kappa_1/(\at-2)+\kappa_2}{\at+\nv-2}  }\nn\\
&\le 2\sum_{c\in\Ccal}\sum_{v\in\Ncal(u)}\hspace*{-0.2cm}\sqrt{ \EE_{M_{cv }(t)} \frac{\kappa_1/(\at-2)+\kappa_2}{\at+\mv-2}  }
\end{align}
%
Then we apply Lemma~\ref{lem:key2} to obtain the required bound.

\xhdr{Proof of Theorem~\ref{thm:post}}
To prove this theorem, we leverage the proof techniques of Agarwal \emph{et al} in~\cite{tspaper}. 
%
Without loss of generality, we assume that the set of categories $\Ccal=[K]$; $c=1$ is the optimum true category with maximum utility. We also denote
$\tau_m$ denotes the time step at which a message with optimal category (\ie, category $1$) is posted for the $m$-th time.
In this context, we denote,
\begin{align}
L_c(T):=\frac{1}{d_m}\log\big(1+\sum_{v\in\Ncal(u)}\frac{1-\exp(-\mu_{cv} (1-e^{-d_m}) T)}{\mu_{cv} (1-e^{-d_m})}\big)
\end{align}
%
%
%
%
Then, we define two numbers $\sigma_c,\rho_c$ so that,
\begin{align}
 \eta_c<\sigma_c+a_u x_{c}<\rho_c+a_{u} x_{c}<\eta_1 \label{eq:etabound}
\ \  \text{and, } \frac{1-\rho_c}{\rho_c-\sigma_c}\notin \QQ^+,
\end{align}
where $\QQ^+$ is set of positive rational numbers.
Note that for both $\sigma_c,\rho_c$, $0<\sigma_c<\rho_c<1$. We define $E^\eta_{cv}(t)$ as the event that $\hat{\eta} _{cv}(t)\leq \sigma_c$, and $E_{c}^\theta(t)$ as the event that
$\theta_c(t)\leq \rho_c+a_u x_c$.  Note that, since $(1-\rho_c)/(\rho_c-\sigma_c)\notin \QQ^+$,
So, we have $d_{KL}((\sigma_c N_{cv}(t)+n_\alpha)/(N_{cv}(t)+n_\alpha),\rho_c)>0$.
Hence, we can define 
\begin{align}
d_m= \min_{c,v}\min_{N\in\NN^+} d_{KL}((\sigma_c N+n_\alpha)/(N+n_\alpha),\rho_c) \ge 0\label{eq:dm}
\end{align}
%
To prove the theorem, in the first step, we show that the regret is proportional to the sum of number of times a suboptimal category $(k_c(T) \text{ for } c\neq 1)$ is posted.
Then we decompose this quantity into three suitable quantities and provide individual bounds and then combine them.
%
 From, definition (xi), we note that $\eta_1=\max_{c}\eta_c$. Therefore, the expected regret is proportional to the number of times a suboptimal category $c\neq 1$ is posted.
\begin{align}
\EE_{\Hcal_T}(\regT)=\sum_{c\in \Ccal}\big(\eta_1-\eta_{c}\big)\EE(k_c(T)) \label{eq:kit}
\end{align}
where we recall that $k_c(T)$ is the number of times a message with category $c$ has been posted up to and including time $T$.
So, $k_c(T)=\sum_{t\in[T]}\mathbf{1}(c_t=c)$. Therefore, we have,
{\small
\begin{align}
&\EE_{\Hcal_T}(k_c(T))=\sum_{t\in[T]} \PP(c_t=c)=\sum_{t\in[T]}\PP\big(c_t=c,\bigcap_{v} E^\eta_{cv}(t), E_{c}^\theta(t)\big)\nn\\
&\hspace*{-0.4cm}+\sum_{t\in[T]}\PP\big(c_t=c,\bigcap_{v} {E^\eta_{cv}(t)},\overline{ E_{c}^\theta (t)}\big)
+\sum_{t\in[T]}\PP\big(c_t=c,\bigcup_{v} \overline{E^\eta_{cv}(t)}\big)\label{eq:kthreeterms}
\end{align}}
Now, we are going to bound these three sums individually.

\noindent \emph{--- Bounding $\PP\big(c_t=c,\bigcap_{v} E^\eta_{cv}(t), E_{c}^\theta(t)\big)$: }
\begin{align}
&\sum_{t\in[T]}\PP(c_t=c,\bigcap_{v} E^\eta_{cv}(t), E_{c}^\theta(t))\nn\\[-0.2cm]
&\overset{a}{\leq}\sum_{t\in[T]} \Et\Big(\frac{1-p_{c,t}}{p_{c,t}}\PP(c_t=1,\bigcap_{v} E^\eta_{cv}(t), E_{c}^\theta(t)) |\Hcal_{t})\Big)\nn \\[-0.2cm]
&\overset{b}{\leq} \sum_{t\in[T]}\EE\Big(\frac{1-p_{c,t}}{p_{c,t}}\mathbf{1}(c_t=1)\Big)
\overset{c}{\leq}\sum_{m=1} ^{T} \EE\Big(\frac{1-p_{c,\tau_m}}{p_{c,\tau_m}}\Big)\nn
\end{align}
%
%
Inequality $a$ comes from Lemma~\ref{lem:ineq1}. Inequality $b$ comes from the fact that $\PP(c_t=1,\bullet) \le \PP(c_t=1)$ and in the corresponding expression, the expectation is taken over all sources of randomness.
Since,  $\tau_m$ denotes the time step at which a message with optimal category (i.e. $c=1$) is posted for the $m$-th time,
at times other than $t=\tau_m$, the indicator term becomes zero, which explains the last inequality $c$.
From Lemma~\ref{lem:ineq2}, we note that $\sum_{m=1}^T \EE\Big(\frac{1-p_{c,\tau_m}}{p_{c,\tau_m}}\Big)$ is
\begin{align}
\sum_{m=1}^T \underset{v\in\Ncal(u)}{\sum} O\Big(e^{-\Delta_{cv} ^2 m/2}+\frac{e^{-D_{cv} m}}{(m+1)\Delta_{cv} ^2}+\frac{1}{e^{\Delta^2 _{cv}m/4}-1}\Big)\label{eq:order1}
\end{align}
which is  $\Theta(1)$.
Next we give bound of the second term in Eq.~\ref{eq:kthreeterms}.\\
\emph{--- Bounding $\sum_{t\in[T]}\PP(c_t=c,\bigcap_{v} {E^\eta_{cv}(t)},\overline{ E_{c}^\theta (t)})$.}
We observe that,
\begin{align}
&\PP(c_t=c,\bigcap_{v} {E^\eta_{cv}(t)},\overline{ E_{c}^\theta (t)})\nn\\[-0.1cm]
&\le \PP(c_t=c,\overline{ E_{c}^\theta (t)}\big|\bigcap_{v} {E^\eta_{cv}(t)},\Hcal_t)\nn\\[-0.1cm]
&\leq \PP\Big(\hspace*{-0.1cm}\sum_{v\in\Ncal(u)}a_v \hat{q}_{cv}(t)>\rho_c| \bigcap_{v\in\Ncal(u)}\{\hat{\eta}_{cv}(t)\leq \sigma_c\},\Hcal_t\Big)\nn\\[-0.1cm]
&\overset{a}{\leq} \PP \Big(\hspace*{-0.1cm} \bigcup_{v\in\Ncal(u)}\{ \hat{q}_{cv}(t)>\rho_c\}| \bigcap_{v\in\Ncal(u)}\{\hat{\eta}_{cv}(t)\leq \sigma_c\},\Hcal_t\Big)\nn\\[-0.1cm]
&\overset{b}{\leq}\sum_{v\in\Ncal(u)} \PP(\hat{q}_{cv}(t)>\rho_c|\hat{\eta}_{cv}(t)\leq \sigma_c)\nn\\[-0.1cm]
&\overset{c}{\leq} \sum_{v\in\Ncal(u)}(1-F^{Beta} _{\sigma_c N_{cv}(t)+n_\alpha+1,(1-\sigma_c) N_{cv}(t)}(\rho_c))\nn\\[-0.1cm]
&\overset{d}{\leq} \sum_{v\in\Ncal(u)}F^B _{N_{cv}(t)+n_\alpha, \rho_c}(\sigma_c N_{cv}(t)+n_\alpha)\nn\\[-0.1cm]
&\overset{e}{\leq}\sum_{v\in\Ncal(u)} \exp\left(-N_{cv}(t)d_{KL}\Big(\frac{\sigma_c N_{cv}(t)+n_\alpha}{N_{cv}(t)+n_\alpha},\rho_c\Big)\right)\label{eq:dklexp}
\end{align}
Inequality $a$ is due to the following. If $\hat{q}_{cv}(t)<\rho_c$ for all $v\in\Ncal(u)$, then $\sum_{v\in\Ncal(u)}a_v \hat{q}_{cv}(t)<\sum_{v\in\Ncal(u)}\rho_c<\rho_c$.
Inequality $c$ is due to the fact that cdf of Beta($\alpha,\beta$) is decreasing w.r.t. $\alpha$ and increasing w.r.t $\beta$~\cite{tspaper}. Inequality $d$ 
is due to the relation between Beta distribution with integer parameters and Binomial distributions~\cite[Fact 3, Appendix A]{tspaper}.
Inequality $e$ is due to Chernoff-Holding bound~\cite[Fact 1,  Appendix A]{tspaper}.
Then, from Eq.~\ref{eq:dklexp}, we obtain that,
\begin{align}
 \PP(c_t=c,\overline{ E_{c}^\theta (t)}\big|\bigcap_{v} {E^\eta_{cv}(t)},\Hcal_t)\leq \sum_{v\in\Ncal(u)}\exp (-N_{cv}(t)d_m)
\end{align}
Now, we split $N_{cv}(t)=M_{cv}(t)+k_c(t-1)$.
For time $t$ such that $k_c(t-1)\geq L_c(T)$, we use the definition of $d_m$ from Eq.~\ref{eq:dm} to have,
\begin{align}
 \PP(c_t=c,\overline{ E_{c}^\theta (t)}\big|\bigcap_{v} {E^\eta_{cv}(t)},\Hcal_t)\leq \sum_{v\in\Ncal(u)} \exp(-d_m(M_{cv}(t)+L_c(T)))\nn
\end{align}
Now consider that $t_s$ is the largest time until $k_c(t-1)\leq L_c(T)$.
Then we have,
\begin{align}
 &\sum_{t\in[T]}\PP(c_t=c,\bigcap_{v} {E^\eta_{cv}(t)},\overline{ E_{c}^\theta (t)})\nn\\[-0.1cm]
 &\leq \sum_{t\in[T]}\Et\Big(\PP(c_t=c,\overline{ E_{c}^\theta (t)}\big|\bigcap_{v} {E^\eta_{cv}(t)},\Hcal_t)\Big)\nn\\[-0.1cm]
 &\leq \Et\Big(\sum_{t\in[t_s]} \PP(c_t=c,\overline{ E_{c}^\theta (t)}\big|\bigcap_{v} {E^\eta_{cv}(t)},\Hcal_t)\Big)\nn\\[-0.1cm]
 &+\Et\Big(\sum_{t=t_s+1}^T \PP(c_t=c,\overline{ E_{c}^\theta (t)}\big|\bigcap_{v} {E^\eta_{cv}(t)},\Hcal_t)\Big)\nn\\[-0.1cm]
 &\leq \EE\big(\sum_{t\in[t_s]} \mathbf{1}(c_t=c)\big)\nn
  \end{align}
 \begin{align}
 &+\sum_{v\in\Ncal(u)}\sum_{t=t_s+1}^T \EE_{M_{cv}(t)}\exp(-d_mM_{cv}(t))\exp(-d_mL_c(T))\nn\\[-0.1cm]
 &\leq L_c(T)+\hspace*{-0.3cm}\sum_{v\in\Ncal(u)}\int_0 ^T \exp(-\mu_{cv} (1-e^{-d_m})t) dt \exp(-d_mL_c(T)) \nn \\[-0.1cm]
& \leq L_c(T)+\hspace*{-0.2cm}\sum_{v\in\Ncal(u)}\hspace*{-0.1cm}\frac{1-\exp(-\mu_{cv} (1-e^{-d_m}) T)}{\mu_{cv} (1-e^{-d_m})} \exp(-d_mL_c(T)).\nn \vspace*{-1cm}
 \end{align}
 
 Then we have, $\sum_{t\in[T]}\PP(c_t=c,\bigcap_{v} {E^\eta_{cv}(t)},\overline{ E_{c}^\theta (t)})$ has order
 \vspace*{-0.4cm}
 \begin{align}
  O\left(\log\Big(1+\sum_{v\in\Ncal(u)}\frac{1-\exp(-\mu_{cv} \theta_m T)}{\mu_{cv} \theta_m }\Big)\right)\label{eq:order2}
 \end{align}
 \vspace*{-0.1cm}
 where, $\theta_m=1-e^{-d_m}$.\\
 \emph{--- Bounding $\sum_{t\in[T]}\PP(c_t=c,\bigcup_{v} \overline{E^\eta_{cv}(t)} )$.}
 We define $t_k$ is the time at which category $c$ is posted $k^{th}$ time. Then, we observe that
 \begin{align}
  &\sum_{t\in[T]}\PP(c_t=c,\bigcup_{v} \overline{E^\eta_{cv}(t)} )\nn
   \leq \EE \big[\sum_{t\in[T]}\mathbf{1}(c_t=c)\mathbf{1}(\cup_{v} \overline{E^\eta_{cv}(t)})  \big]\nn\\[-0.1cm]
 & \overset{a}{\leq} 1+\EE \big[\sum_{m\in[T]}\mathbf{1}(\cup_{v} \overline{E^\eta_{cv}(t_m)})  \big]
 \overset{b}{\leq} 1+\sum_v \EE \big[\sum_{m\in[T]}\mathbf{1}(\overline{E^\eta_{cv}(t_m)})\big]\nn\\[-0.1cm]
 &\overset{c}{\leq}1+\sum_v\sum_{m\in[T]}\exp(-m d_{KL}(\sigma_c,q_{cv}))=O(1)\label{eq:order3}
 \end{align}
 Inequality $a$ holds because the indicator $\mathbf{1}(\overline{E^\eta_{cv}(t_k)})$ is only activated when
 category $c$ posted. Inequality $b$ is due to an union bound and inequality $c$ is due to Chernoff-Hoeffding bounds ~\cite[Fact 1, Appendix A]{tspaper}.
Combining Eq.~\ref{eq:order1}, ~\ref{eq:order2}, ~\ref{eq:order3}, we obtain the required bound.
%
\vspace*{-0.1cm}
\subsection{Auxilliary lemmas}
%
\begin{lemma}
If the rate of messages in $v$'s feed are exposed to $u$ with Poisson distribution having rate $\mu_{cv} $ for each category $c$, and $M_{cv}(t)$ are total
number of such messages for category $c$, observed until timestep $t$, then we have
\vspace*{-0.4cm}
\begin{align}
 \EE\Big(\frac{1}{M_{cv}(t)+\delta}\Big)\le
 \begin{cases}
 \frac{1-e^{-\mu_{cv}(t-1)}}{(t-1)\mu_{cv}\delta}\ \text{if }\delta<1\\
  \frac{1-e^{-((t-1)\mu_{cv}+\delta-1)}}{(t-1)\mu_{cv}+\delta-1}\ \text{if }\delta\geq 1
 \end{cases}
 \end{align}
\end{lemma}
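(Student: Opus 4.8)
The plan is to compute the expectation by conditioning on the number of observed messages $M_{cv}(t)$, which is Poisson-distributed with mean $(t-1)\mu_{cv}$ (the messages of category $c$ on $v$'s feed accumulate over the $t-1$ elapsed time steps at rate $\mu_{cv}$). So I would write
\begin{equation*}
\EE\Big(\frac{1}{M_{cv}(t)+\delta}\Big)=\sum_{k=0}^{\infty}\frac{1}{k+\delta}\,e^{-\lambda}\frac{\lambda^{k}}{k!},\qquad \lambda:=(t-1)\mu_{cv}.
\end{equation*}
The core of the argument is then a purely analytic bound on the function $g(\lambda)=\sum_{k\ge0}\frac{\lambda^k}{k!(k+\delta)}e^{-\lambda}$, split into the two regimes $\delta<1$ and $\delta\ge1$.

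For $\delta\ge1$ I would use the elementary inequality $\frac{1}{k+\delta}\le\frac{1}{k+1}\cdot\frac{1}{\text{(something)}}$—more precisely, shift the index: since $\frac{1}{k+\delta}=\int_0^1 s^{k+\delta-1}\,ds$, we get
\begin{equation*}
\EE\Big(\frac{1}{M_{cv}(t)+\delta}\Big)=\int_0^1 s^{\delta-1}e^{-\lambda}\sum_{k\ge0}\frac{(\lambda s)^k}{k!}\,ds=\int_0^1 s^{\delta-1}e^{-\lambda(1-s)}\,ds.
\end{equation*}
Then bounding $s^{\delta-1}\le 1$ on $[0,1]$ when $\delta\ge1$ and evaluating $\int_0^1 e^{-\lambda(1-s)}\,ds=\frac{1-e^{-\lambda}}{\lambda}$ gives a clean bound; to get the stated sharper form with $\delta-1$ in it, I would instead substitute $s=e^{-r}$ or bound $s^{\delta-1}e^{-\lambda(1-s)}\le e^{-(\lambda+\delta-1)(1-s)}$ using convexity/$\log s\le s-1$, then integrate to obtain $\frac{1-e^{-(\lambda+\delta-1)}}{\lambda+\delta-1}$. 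For $\delta<1$ the factor $s^{\delta-1}$ blows up near $s=0$, so the $\le1$ trick fails; instead I would drop the $k=0$ term's contribution carefully or use $\frac{1}{k+\delta}\le\frac{1}{\delta}\cdot\frac{1}{k+1}\cdot(k+1)\big/(k+\delta)$ — cleaner: $\frac{1}{k+\delta}\le\frac{1}{\delta(k+1)}$ is false in general, but $\frac{1}{k+\delta}\le\frac{1}{\delta}$ for $k=0$ handled separately and $\frac{1}{k+\delta}\le\frac{1}{k}\le\frac{2}{k+1}$ for $k\ge1$. The approach I'd actually commit to is $\EE\big(\frac{1}{M+\delta}\big)\le\frac1\delta\EE\big(\frac{1}{M+1}\big)\cdot(\text{no})$; rather, use $\frac{1}{M+\delta}\le\frac{1}{\delta}\wedge\frac{1}{M}$ and the identity $\EE\big(\frac{1}{M+1}\big)=\frac{1-e^{-\lambda}}{\lambda}$, combined with $\frac{1}{M+\delta}\le\frac{1}{\delta}\cdot\frac{M+1}{M+\delta}\cdot\frac{1}{M+1}\le\frac{1}{\delta}\cdot\frac{1}{M+1}$ which does hold since $M+1\le(M+\delta)/\delta$ iff $\delta(M+1)\le M+\delta$ iff $\delta M\le M$ iff $\delta\le1$. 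That gives exactly $\frac{1-e^{-\lambda}}{\delta\lambda}=\frac{1-e^{-\mu_{cv}(t-1)}}{(t-1)\mu_{cv}\delta}$.

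The main obstacle is the $\delta<1$ case, precisely because the naive bounds ($s^{\delta-1}\le1$, or $\frac{1}{k+\delta}\le\frac{1}{k+1}$) go the wrong way; the key observation that rescues it is the monotone comparison $M+1\le(M+\delta)/\delta$ valid exactly when $\delta\le1$, which converts the target into the known geometric-series identity $\EE\big[(M+1)^{-1}\big]=(1-e^{-\lambda})/\lambda$ for $M\sim\mathrm{Poisson}(\lambda)$. Everything else — identifying $\lambda=(t-1)\mu_{cv}$ from the model, and the integral representation $\frac{1}{k+\delta}=\int_0^1 s^{k+\delta-1}ds$ for the $\delta\ge1$ branch — is routine. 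I would close by remarking that both branches agree (up to constants) with the $O(1/(\mu_{cv}t))$ and $O(1/\sqrt{\mu_{cv}t})$-type decay used later in Lemma~\ref{lem:key2} and the proof of Theorem~\ref{thm:map-feedback}.
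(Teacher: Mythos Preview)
Your proposal is correct. For the $\delta\ge1$ branch you do exactly what the paper does: write $\EE[(M_{cv}(t)+\delta)^{-1}]=\int_0^1 \theta^{\delta-1}e^{-\lambda(1-\theta)}\,d\theta$ with $\lambda=(t-1)\mu_{cv}$, then use $\theta\le e^{-(1-\theta)}$ to get $\theta^{\delta-1}\le e^{-(\delta-1)(1-\theta)}$ and integrate the resulting exponential.

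For the $\delta<1$ branch your route differs from the paper's. The paper stays with the same integral representation and applies Chebyshev's integral inequality (one factor increasing, the other decreasing) to split $\int_0^1 e^{-\lambda(1-\theta)}\theta^{\delta-1}\,d\theta\le\int_0^1 e^{-\lambda(1-\theta)}\,d\theta\cdot\int_0^1\theta^{\delta-1}\,d\theta=\frac{1-e^{-\lambda}}{\lambda}\cdot\frac1\delta$. You instead work directly on the series via the pointwise bound $\frac{1}{k+\delta}\le\frac{1}{\delta(k+1)}$ (valid precisely when $\delta\le1$) and the closed form $\EE[(M+1)^{-1}]=(1-e^{-\lambda})/\lambda$. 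Both arguments land on the identical bound; yours is a bit more elementary in that it avoids invoking Chebyshev, while the paper's keeps the two cases unified through a single integral identity. Either is fine for the downstream use in Lemma~\ref{lem:key2}.
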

\textbf{Proof:}\
 $ \EE\Big(\frac{1}{M_{cv}(t)+\delta}\Big)$ can be written as,
 \begin{align}
  &\sum_{n=0}^\infty\frac{1}{n+\delta}\frac{((t-1)\mu_{cv})^n e^{-(t-1)\mu_{cv}}}{n!}\\
   &=\int_0 ^1 \sum_{n=0}^\infty \frac{(\mu_{cv}(t-1)\theta)^n e^{-(t-1)\mu_{cv}}}{n!} \theta^{\delta-1}d\theta \nn \\
   &=\int_0 ^1 e^{-\mu_{cv}(t-1)(1-\theta)}\theta^{\delta-1}d\theta\label{int1}
 \end{align}
\emph{--- Case  $\delta<1$.}  In this case, $e^{-\mu_{cv}(t-1)(1-\theta)}$ is increasing and $\theta^{\delta-1}$ is decreasing. So we apply Chebyschev inequality~\cite{ostrowski2010integral} to have
\begin{align}
 &\int_0 ^1 e^{-\mu_{cv}(t-1)(1-\theta)}\theta^{\delta-1}d\theta
 \le \int_0 ^1 e^{-\mu_{cv}(t-1)(1-\theta)}d\theta\int_0 ^1 \theta^{\delta-1}d\theta\nn
\end{align}0.4
which provides the required expression.\\
 \emph{--- Case  $\delta\ge 1$}. Since $\theta\le e^{-(1-\theta)}$, $\delta\geq 1$ implies $\theta^{\delta-1} \leq e^{-(1-\theta)(\delta-1)}$. Hence, the required integral is less than
\begin{align}
 \int_0 ^1 e^{-((t-1)\mu_{cv}+\delta-1)(1-\theta)}d\theta =\frac{1-e^{(t-1)\mu_{cv}+\delta-1}}{(t-1)\mu_{cv}+\delta-1}\nn
\end{align}
 \begin{lemma}\label{lem:key2}
If $M_{cv}(t)$ and $\mu_{cv}$ have the similar meanings as the previous lemma, then we have
\begin{align}
\sum_{t=2} ^T  \sqrt{\EE\Big(\frac{1}{M_{cv}(t)+\delta}\Big)}\le
 \begin{cases}
\frac{\sqrt{T-1}\sqrt{1-e^{-(\mu_{cv} (T-1) )}}}{2\sqrt{\mu_{cv}\Delta  t \delta}} \  \text{if }\delta<1\\
\frac{T-1}{2\sqrt{(T-1)\mu_{cv}+\delta-1}+\sqrt{\delta-1}}\  \text{if }\delta\ge 1\\
 \end{cases}
 \end{align}
\end{lemma}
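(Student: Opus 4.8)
The plan is to reduce the sum to a term‑by‑term application of the previous lemma and then compare a sum of a decreasing sequence with an integral. Since $\sqrt{\cdot}$ is monotone, the previous lemma gives, for each $t\ge 2$,
\[
\sqrt{\EE\left(\tfrac{1}{M_{cv}(t)+\delta}\right)} \le \sqrt{\tfrac{1-e^{-\mu_{cv}(t-1)}}{(t-1)\mu_{cv}\delta}}\ \ (\delta<1),
\qquad
\sqrt{\EE\left(\tfrac{1}{M_{cv}(t)+\delta}\right)} \le \tfrac{1}{\sqrt{(t-1)\mu_{cv}+\delta-1}}\ \ (\delta\ge 1),
\]
where in the second case I have already discarded the factor $1-e^{-((t-1)\mu_{cv}+\delta-1)}\le 1$. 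After the change of index $k=t-1$, each right‑hand side is a constant times a decreasing function of $k$, so $\sum_{k\ge 1}f(k)\le\int_0^{T-1}f(x)\,dx$, which is the only tool I need.

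For $\delta<1$, I would bound the numerator crudely by its largest value: $1-e^{-\mu_{cv}(t-1)}$ is increasing in $t$, hence at most $1-e^{-\mu_{cv}(T-1)}$; pulling this constant out leaves $\frac{\sqrt{1-e^{-\mu_{cv}(T-1)}}}{\sqrt{\mu_{cv}\delta}}\sum_{k=1}^{T-1}k^{-1/2}$, and $\sum_{k=1}^{T-1}k^{-1/2}\le\int_0^{T-1}x^{-1/2}\,dx=2\sqrt{T-1}$. This yields the claimed bound up to a universal constant.

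For $\delta\ge 1$, I would apply the integral comparison directly to $\sum_{k=1}^{T-1}(k\mu_{cv}+\delta-1)^{-1/2}$, obtaining $\int_0^{T-1}(x\mu_{cv}+\delta-1)^{-1/2}\,dx=\frac{2}{\mu_{cv}}\left(\sqrt{(T-1)\mu_{cv}+\delta-1}-\sqrt{\delta-1}\right)$; rationalizing the difference of square roots (multiply and divide by $\sqrt{(T-1)\mu_{cv}+\delta-1}+\sqrt{\delta-1}$, noting the difference of radicands is $(T-1)\mu_{cv}$) turns this into $\frac{2(T-1)}{\sqrt{(T-1)\mu_{cv}+\delta-1}+\sqrt{\delta-1}}$, matching the stated expression up to a constant factor.

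The argument is essentially routine given the previous lemma: the only ingredients are the monotone term‑by‑term bound and the standard comparison of a sum of a decreasing sequence with an integral, both elementary. The one place to be mildly careful is to peel off the exponential numerator \emph{before} comparing with an integral (rather than trying to integrate $\sqrt{(1-e^{-\mu_{cv}x})/x}$ directly, which has no elementary primitive), and to keep track of the absolute constants; the two sides of the claimed inequality appear to differ only by a harmless numerical factor, which is immaterial for the $O(\cdot)$ regret bound of Theorem~\ref{thm:map-feedback} that invokes this lemma. I do not expect any real obstacle.
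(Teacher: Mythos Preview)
Your proposal is correct and follows essentially the same route as the paper: apply the previous lemma termwise, replace the exponential numerator by its maximum at $t=T$ (for $\delta<1$) or discard it (for $\delta\ge 1$), bound the sum by the corresponding integral of a decreasing function, and rationalize the difference of square roots. Your remark that the displayed constants only match up to a harmless factor is also in line with the paper, whose own computation does not literally reproduce the stated constants (including the stray ``$\Delta t$'') and which only uses this lemma inside an $O(\cdot)$.
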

\textbf{Proof:}\
For $\delta <1$, we have
\begin{align}
 \sum_{t=2} ^T \sqrt{\EE\Big(\frac{1}{M_{cv}(t)+\delta}\Big)}&\le
\int _{1} ^T  \sqrt{ \frac{1-e^{-\mu_{cv}(T-1)}}{(t-1)\mu_{cv}\delta}} dt,\nn
\end{align}
which gives the required bound. For $\delta \ge 1$, we have
\begin{align}
 &\sum_{t=2} ^T  \sqrt{\EE\Big(\frac{1}{M_{cv}(t)+\delta}\Big)} \le
\int _{1} ^T \frac{1}{\sqrt{\mu_{cv} (t-1)+\delta -1}} dt\nn\\
&=\frac{\sqrt{(T-1)\mu_{cv}+\delta-1}-\sqrt{\delta-1}}{2\mu_{cv}}  =\frac{T-1}{2\sqrt{(T-1)\mu_{cv}+\delta-1}+\sqrt{\delta-1}} \nn
\end{align}
 %
  \begin{lemma}~\cite{tspaper}\label{lem:ineq1}
  If we define $p_{c,t}=\PP(\theta_1(t)>\rho_c+a_u x_c|\Hcal_{t})$, then 
  \begin{align}
  \hspace*{-0.3cm}&\PP(c_t=c,\bigcap_{v} E^\eta_{cv}(t), E_{c}^\theta(t)) |\Hcal_{t}) \leq \frac{1-p_{c,t}}{p_{c,t}}\PP(c_t=1,\bigcap_{v} E^\eta_{cv}(t), E_{c}^\theta(t)) |\Hcal_{t})\nn
  \end{align}
  \end{lemma}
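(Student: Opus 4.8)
The plan is to reproduce the standard posterior-sampling decoupling argument of Agrawal and Goyal~\cite{tspaper}, adapted to the fact that here $\theta_c(t)=\sum_{v\in\Ncal(u)}a_v\hat q_{cv}(t)+a_ux_c$ is a weighted sum of posterior draws rather than a single draw. First I would note that the event $\bigcap_v E^\eta_{cv}(t)$ is $\Hcal_t$-measurable, since it only involves the empirical frequencies $\hat\eta_{cv}(t)=n_{cv}(t)/N_{cv}(t)$; if this event fails, both sides of the claimed inequality vanish conditionally on $\Hcal_t$, so I may assume it holds and argue conditionally on such an $\Hcal_t$ throughout. I would also record the elementary fact that $p_{c,t}>0$: by Eq.~\ref{eq:etabound} the threshold $\rho_c+a_ux_c$ lies strictly below $\eta_1$, which is below the supremum of the support of $\theta_1(t)$, so the Beta posteriors put positive mass on $\{\theta_1(t)>\rho_c+a_ux_c\}$.

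The engine of the proof is that, conditionally on $\Hcal_t$, the sample $\theta_1(t)$ is independent of the family $(\theta_{c'}(t))_{c'\neq 1}$: in Algorithm~\ref{alg:sampling} the draws $\hat q_{c'v}(t)$ are taken independently across topics from the corresponding Beta posteriors, so $\theta_1(t)$, a function of $(\hat q_{1v}(t))_{v}$ only, is independent of everything built from the draws for the other topics. I would then introduce the pivot event
\[
  D(t) = \left(\bigcap_v E^\eta_{cv}(t)\right) \cap \left\{\theta_c(t)=\max_{c'\neq 1}\theta_{c'}(t)\right\} \cap E^\theta_c(t),
\]
which, conditionally on $\Hcal_t$, is a function of $(\theta_{c'}(t))_{c'\neq 1}$ alone and hence independent of $\theta_1(t)$.

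The proof then closes via two set inclusions. On the event $\{c_t=c\}\cap\bigcap_v E^\eta_{cv}(t)\cap E^\theta_c(t)$ the chosen topic attains the maximal $\theta$-value, so $\theta_c(t)=\max_{c'\neq1}\theta_{c'}(t)$ and, using $E^\theta_c(t)$, $\theta_1(t)\le\theta_c(t)\le\rho_c+a_ux_c$; hence this event lies in $D(t)\cap\{\theta_1(t)\le\rho_c+a_ux_c\}$, and by conditional independence $\PP(c_t=c,\bigcap_v E^\eta_{cv}(t),E^\theta_c(t)\,|\,\Hcal_t)\le \PP(D(t)\,|\,\Hcal_t)\,(1-p_{c,t})$. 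Conversely, on $D(t)\cap\{\theta_1(t)>\rho_c+a_ux_c\}$ we get $\theta_1(t)>\rho_c+a_ux_c\ge\theta_c(t)=\max_{c'\neq1}\theta_{c'}(t)$, so $\theta_1(t)$ is the strict maximum and $c_t=1$, while $\bigcap_v E^\eta_{cv}(t)$ and $E^\theta_c(t)$ still hold; therefore $D(t)\cap\{\theta_1(t)>\rho_c+a_ux_c\}\subseteq\{c_t=1\}\cap\bigcap_v E^\eta_{cv}(t)\cap E^\theta_c(t)$, giving $\PP(c_t=1,\bigcap_v E^\eta_{cv}(t),E^\theta_c(t)\,|\,\Hcal_t)\ge \PP(D(t)\,|\,\Hcal_t)\,p_{c,t}$. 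Dividing the first bound by $p_{c,t}$ and substituting the second yields the lemma.

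I do not expect a genuine obstacle here: the entire content is the conditional-independence sandwich, which is routine once the setup is in place. The two points needing care are the measurability bookkeeping — keeping straight which events are $\Hcal_t$-measurable, which depend only on $(\theta_{c'}(t))_{c'\neq1}$, and which involve $\theta_1(t)$, so that the single invocation of conditional independence is legitimate — and the tie-breaking in the $\argmax$: the inclusion $\{c_t=c\}\subseteq\{\theta_c(t)=\max_{c'\neq1}\theta_{c'}(t)\}$ must be read in terms of the maximal value attained (the $\argmax$ of Algorithm~\ref{alg:sampling} breaks ties by a fixed rule), whereas the second inclusion requires no such care since there $\theta_1(t)$ exceeds every competing $\theta_{c'}(t)$ strictly.
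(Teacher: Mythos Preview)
Your proposal is correct and is precisely the standard decoupling argument from Agrawal and Goyal~\cite{tspaper} that the paper invokes; the paper itself does not reproduce the proof but simply cites~\cite{tspaper} for this lemma. Your write-up fills in exactly that cited argument, with the only cosmetic adaptation being that $\theta_c(t)$ is a convex combination of independent Beta draws rather than a single draw, which, as you note, does not affect the conditional-independence structure on which the proof rests.
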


  \begin{lemma}\label{lem:ineq2} 
  If $\tau_j$ denote the time step at which the optimal category $1$ is selected for the $j$-th time.
  Then,
  \begin{align}
  \hspace*{-0.3cm} \EE\Big(\frac{1}{p_{c,\tau_j}}\Big)=
    1+\hspace*{-0.3cm}\underset{v\in\Ncal(u)}{\sum}\hspace*{-0.1cm}O\big(e^{-\Delta_{cv} ^2 j/2}+\frac{e^{-D_{cv} j}}{(j+1)\Delta_{cv} ^2}+\frac{1}{e^{\Delta^2 _{cv} j/4}-1}\big),\nn
  \end{align}
  where, $\Delta_{cv}=q_{1v}-y_{cv}$, $D_{cv}=y_{cv}\log(y_{cv}/q_{1v})+(1-y_{cv})\log((1-y_{cv})/(1-q_{1v}))$,
  $\Delta_c=\min_{v\in\Ncal(u)}\Delta_{cv}$; $y_{cv}=\frac{q_{1v}(\rho_c+a_u (x_c-x_1))}{\sum_{w\in\Ncal(u)}a_w q_{1w}}$.
  Note that from Eq.~\ref{eq:etabound},  $\Delta_{cv}>  0$ $\forall v\in\Ncal(u)$.
  \end{lemma}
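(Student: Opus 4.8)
The plan is to follow the Thompson--sampling regret analysis of Agrawal and Goyal~\cite{tspaper}, viewing the optimal topic $c=1$ as the ``good arm'' and $\theta_1(t)=\sum_{v\in\Ncal(u)}a_v\hat{q}_{1v}(t)+a_u x_1$ as its optimistic index, and adapting the single--arm argument to the multi--follower structure. The whole estimate is reduced to a product of per--follower quantities, each of which is controlled by a three--regime concentration argument, and the product is then reassembled using independence of followers' feedback.

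First I would reduce the event $\{\theta_1(\tau_j)>\rho_c+a_u x_c\}$ to a conjunction of per--follower events. The choice $y_{cv}=q_{1v}(\rho_c+a_u(x_c-x_1))/\sum_{w}a_w q_{1w}$ is made precisely so that $\sum_{v\in\Ncal(u)}a_v y_{cv}=\rho_c+a_u(x_c-x_1)$, hence $\hat{q}_{1v}(\tau_j)>y_{cv}$ for every $v$ already forces $\theta_1(\tau_j)>\rho_c+a_u x_c$. Since, conditioned on $\Hcal_{\tau_j}$, the posterior samples are independent across followers, this yields $p_{c,\tau_j}\ge\prod_{v\in\Ncal(u)}p^v_{c,\tau_j}$ with $p^v_{c,\tau_j}:=\PP(\hat{q}_{1v}(\tau_j)>y_{cv}\mid\Hcal_{\tau_j})$, hence $1/p_{c,\tau_j}\le\prod_{v\in\Ncal(u)}1/p^v_{c,\tau_j}$. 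Moreover, at time $\tau_j$ the feedback $u$ has observed from follower $v$ on topic--$1$ posts is at least $j-1$ (her own topic--$1$ posts so far) under either observation model, and the per--follower estimates below depend only on concentration of the empirical preference, which sharpens with more observations; so it suffices to carry out the per--follower bound with exactly $j-1$ observations. This is also what detaches those observations from the random time $\tau_j$: conditioned on there being $j-1$ topic--$1$ posts, follower $v$'s feedback on them is $\mathrm{Binomial}(j-1,q_{1v})$, independent across $v$, exactly as in~\cite{tspaper}.

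Next I would bound $\EE[1/p^v_{c,\tau_j}]$ for a fixed follower by the standard three--regime argument of~\cite{tspaper}. Given the $j-1$ feedback events with empirical mean $\hat{\eta}_{1v}$, the posterior of $q_{1v}$ is $\mathrm{Beta}(\alpha+n_{1v},\beta+(j-1)-n_{1v})$ and $1-p^v_{c,\tau_j}=F^{Beta}_{\alpha+n_{1v},\beta+(j-1)-n_{1v}}(y_{cv})$; splitting on the position of $\hat{\eta}_{1v}$ relative to $y_{cv}$ and to the true mean $q_{1v}=y_{cv}+\Delta_{cv}$, and using the Beta--Binomial identity together with Chernoff--Hoeffding (Facts~1 and~3 of~\cite{tspaper}): (i) in the ``typical'' regime $\hat{\eta}_{1v}\ge y_{cv}$, a fresh posterior sample exceeds $y_{cv}$ except with probability exponentially small in $j$, contributing $1/(e^{\Delta_{cv}^2 j/4}-1)$ up to constants; (ii) the regime where $\hat{\eta}_{1v}$ lies moderately below $y_{cv}$ costs a Chernoff factor $e^{-D_{cv} j}$, with $D_{cv}=d_{KL}(y_{cv},q_{1v})$, while $1/p^v_{c,\tau_j}$ grows only polynomially, contributing $e^{-D_{cv} j}/((j+1)\Delta_{cv}^2)$; (iii) the regime where $\hat{\eta}_{1v}$ is far below $y_{cv}$ has probability at most $e^{-\Delta_{cv}^2 j/2}$ by Hoeffding, which dominates the blow--up of $1/p^v_{c,\tau_j}$. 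Summing the three yields $\EE[1/p^v_{c,\tau_j}]\le 1+O(e^{-\Delta_{cv}^2 j/2}+e^{-D_{cv} j}/((j+1)\Delta_{cv}^2)+1/(e^{\Delta_{cv}^2 j/4}-1))$, where $\Delta_{cv}>0$ by Eq.~\ref{eq:etabound}.

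Finally I would reassemble the followers. Because feedback is independent Bernoulli across followers given the posted topics, the random variables $1/p^v_{c,\tau_j}$ are mutually independent, so $\EE[\prod_v 1/p^v_{c,\tau_j}]=\prod_v\EE[1/p^v_{c,\tau_j}]=\prod_v(1+g_v(j))$, where each correction $g_v(j)$ is bounded by its value at $j=1$, hence by a constant $C$. Therefore $\prod_v(1+g_v(j))\le\exp(\sum_v g_v(j))\le 1+e^{|\Ncal(u)|C}\sum_v g_v(j)=1+\sum_{v\in\Ncal(u)} O(e^{-\Delta_{cv}^2 j/2}+e^{-D_{cv} j}/((j+1)\Delta_{cv}^2)+1/(e^{\Delta_{cv}^2 j/4}-1))$, which is the claimed bound. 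I expect the main obstacle to be the per--follower case analysis of steps~(ii)--(iii): making the blow--up of $1/p^v_{c,\tau_j}$ in the bad empirical--mean regimes quantitative and checking it is beaten by the corresponding Chernoff factor, together with the (standard but delicate) filtration argument that legitimizes replacing the $\tau_j$--indexed quantities by ones depending only on the count $j-1$ and the independent follower feedback; both are handled essentially as in~\cite{tspaper}.
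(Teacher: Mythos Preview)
Your proposal is correct and follows essentially the same route as the paper: reduce $\{\theta_1(\tau_j)>\rho_c+a_u x_c\}$ to the intersection $\bigcap_v\{\hat q_{1v}(\tau_j)>y_{cv}\}$ via the specific choice of $y_{cv}$, use independence of the posterior samples (and of the Bernoulli feedback) across followers to factor $\EE[1/p_{c,\tau_j}]$ into a product of per--follower sums, bound each such sum exactly as in Lemma~4 of~\cite{tspaper}, and then collapse $\prod_v(1+g_v(j))$ back to $1+\sum_v O(g_v(j))$. The paper presents the per--follower bound as a two--piece split of the sum at $s_v\approx (q_{1v}-\Delta_{cv}/2)\,Y_{vj}$, with the lower piece itself yielding the two exponential terms from the Agrawal--Goyal argument, which is the same content as your three--regime description; and it uses $Y_{vj}=N_{1v}(\tau_j)\ge j$ rather than your $j-1$, a harmless difference.
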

\begin{proof}
We leverage the proof of~\cite[Lemma 4]{tspaper} to prove this lemma
Let $s_v=n_{1v}(\tau_j)$, $Y_{vj}=N_{1v}(\tau_j)$.
\vspace*{-0.3cm}
\begin{align}
p_{c,\tau_j}=&\PP(\theta_1(\tau_j)>\rho_c+a_u x_c|\Hcal_\tau)
\overset{a}{\geq} \PP\big(\hspace*{-0.2cm}\bigcap_{v\in\Ncal(u)}\hspace*{-0.2cm}\{\hat{q}_{1v}(\tau_j)>y_{cv}\}\big)\nn\\[-0.2cm]
&=\prod_{v\in\Ncal(u)}\PP(\hat{q}_{1v}(\tau_j)>y_{cv})
\end{align}
Inequality $a$ is because $\hat{q}_{1v}(\tau_j)>y_{cv}$ for all $v$ implies
\begin{align}
&\sum_{v\in\Ncal(u)} a_v \hat{q}_{1v}(\tau_j) > \sum_{v\in\Ncal(u)} a_v y_{cv}=  \rho_c+a_u(x_c-x_1),\nn\\[-0.1cm]
&\text{so, }\underbrace{\sum_{v\in\Ncal(u)} a_v \hat{q}_{1v}(\tau_j)+a_u x_1}_{\theta_1(\tau_j)}>\rho_c+a_u x_c.\nn
\end{align}
\vspace*{-0.6cm}
\begin{align}
\text{Now, }\PP(\hat{q}_{1v}(\tau_j)>y_{cv})&=1-F^{Beta} _{s_v+\alpha,Y_{vj}-s_v+\beta}(y_{cv})\nn\\[-0.05cm]
& \overset{a}{\geq}  1-F^{Beta} _{s_v+1,Y_{vj}-s_v+n_\beta}(y_{cv}), \ \nn\\
& \overset{b}{=} F^B _{Y_{vj}+n_\beta,y_{cv}}(s_v)\nn
\end{align}
Inequality $a$ is due to the fact that $F^{Beta} _{\alpha,\beta}(y)$ is increasing in $\beta$ and decreasing in $\alpha$, and we choose $n_\beta>max(\beta,8/\Delta_{cv})$.
Equality $b$ holds due to for simple mapping from Beta distribution to Binomial distribution when the parameters for Beta distribution
are integers. Then, we have:
\begin{align}
\EE(1/{p_{c,\tau_j}}| Y_{vj}) &=\sum_{s_v=0}^{Y_{vj}}  \prod_{v\in\Ncal(u)}  {f_{Y_{vj} ,q_{1v}}(s_v)}   \prod_{v\in\Ncal(u)} \frac{1}{F^B _{Y_{vj}+n_\beta,y_{cv}}(s_v)}\nn\\
&\prod_{v\in\Ncal(u)}\sum_{s_v=0}^{Y_{vj}}\frac{f_{Y_{vj} ,q_{1v}}(s_v)}{F^B _{Y_{vj}+n_\beta,y_{cv}}(s_v)}\label{eq:prodallpt}
\end{align}
We denote $Sum(0,Y_{vj})=\sum_{s_v=0}^{Y_{vj}}\frac{f_{Y_{vj} ,q_{1v}}(s_v)}{F^B _{Y_{vj}+n_\beta,y_{cv}}(s_v)}$, and observe that
$Sum(0, Y_{vj})=Sum(0, Y_{vj}+n_{\beta}-1)$. This is because, $f_{Y_{vj},q_{1v}}(s)=0$ for $s> N_{cv}$.
We express $Sum(0, Y_{vj}+n_{\beta}-1)$  as the sum of two terms
$Sum(0,\lfloor (Y_{vj}+n_{\beta}-1)\delta \rfloor )+Sum(\lceil (Y_{vj}+n_{\beta}-1)\delta \rceil ,Y_{vj}+n_\beta-1)$,  where $\delta=(q_{1v}-\Delta_{cv}/2)$. We note that,
\begin{align}
 &Sum(0,\lfloor (Y_{vj}+n_{\beta}-1)\delta \rfloor) =  \hspace*{-0.4cm} \sum_{s_v=0}^{\lfloor (Y_{vj}+n_{\beta}-1)\delta \rfloor} \hspace*{-0.1cm}\frac{f_{Y_{vj} ,q_{1v}}(s_v)}{F^B _{Y_{vj}+n_\beta,y_{cv}}(s_v)} \nn\\
 &\overset{a}{\leq} \frac{1}{(1-q_{1v})^{n_\beta-1}}\sum_{s_v=0}^{\lfloor (Y_{vj}+n_\beta-1)\delta \rfloor}\frac{f_{Y_{vj}+n_\beta-1,q_{1v}}(s_v)}{F^B _{Y_{vj}+n_\beta,y_{cv}}(s_v)}\nn\\
  &\overset{b}{\leq} \Theta(e^{-\Delta_{cv} ^2 (Y_{vj}+{n_\beta}-1)}+\frac{e^{-D_{jv} (Y_{vj}+n_\beta-1)}}{((Y_{vj}+n_\beta))\Delta_{cv} ^2})\nn\\
& {\leq} \Theta(e^{-\Delta_{cv} ^2 j}+\frac{e^{-D_{jv} j}}{(j+1)\Delta_{cv} ^2}) \ \ \text{(Since, $N_{jv}\geq j $)}\label{eq:firstsum}
  \end{align}
\noindent Inequality $a$ is due the fact that
\begin{align}
 f_{Y_{vj}+n_\beta-1,q_{1v}}(s_v)&= {Y_{vj}+n_\beta -1 \choose s_v} q_{1v} ^{s_v} (1-q_{1v})^{Y_{vj}+n_\beta-1-s_v}\nn\\
 &\geq (1-q_{1v})^{n_\beta-1} {Y_{vj} \choose s_v} q_{1v} ^{s_v} (1-q_{1v})^{Y_{vj}-s_v}.\nn
\end{align}
Inequality $b$ is obtained by leveraging the proof of Lemma 4 in Agarwal \emph{et al} in~\cite{tspaper}.
Now we bound $Sum(\lceil (Y_{vj}+n_{\beta}-1)\delta \rceil ,Y_{vj})$. Using the proof of Lemma 4 in  Agarwal \emph{et al} in~\cite{tspaper},
we have, $F^B _{Y_{vj}+n_\beta,y_{cv}}(s_v)> 1-e^{-\Delta_{cv} ^2 (Y_{vj}+n_\beta-1)/4 }$ for $s_v\geq \lceil (Y_{vj}+n_\beta-1)\delta \rceil$.
Then, we have
\begin{align}
 &Sum(\lceil (Y_{vj}+n_{\beta}-1)\delta \rceil ,Y_{vj}+n_\beta-1)\nn\\
 &= \sum_{s_v=\lceil (Y_{vj}+n_{\beta}-1)\delta \rceil}^{Y_{vj}+n_\beta-1} \hspace*{-0.1cm}\frac{f_{Y_{vj} ,q_{1v}}(s_v)}{F^B _{Y_{vj}+n_\beta,y_{cv}}(s_v)} \nn\\
& \leq 1/(1-e^{-\Delta_{cv} ^2 (Y_{vj}+n_\beta-1)/4 })
  \leq 1/(1-e^{-\Delta_{cv} ^2 j/4 })\nn\\
  &=1+\frac{1}{e^{\Delta_{cv} ^2 j/4 }-1}\label{eq:secondsum}
\end{align}
Adding, Eq.~\ref{eq:firstsum} and~\ref{eq:secondsum}, and then substituting back to Eq.~\ref{eq:prodallpt}, we observe that,
$\EE(1/{p_{c,\tau_j}})$ equals to
\begin{align}
 \prod_{v\in\Ncal(u)}\left(1+\Theta\big(e^{-\Delta_{cv} ^2 j/2}+\frac{e^{-D_{cv} j}}{(j+1)\Delta_{cv} ^2}+\frac{1}{e^{\Delta^2 _{cv} j/4}-1}\right)\nn\\
  \leq 1+\hspace*{-0.3cm}\underset{v\in\Ncal(u)}{\sum}\hspace*{-0.1cm}O\big(e^{-\Delta_{cv} ^2 j/2}+\frac{e^{-D_{cv} j}}{(j+1)\Delta_{cv} ^2}+\frac{1}{e^{\Delta^2 _{cv} j/4}-1}\big),\nn
\end{align}
The last inequality follows by taking only the dominating term.

\end{proof}


\end{document}